\def\01{\{0,1\}}
\newcommand{\eps}{\varepsilon}
\newcommand{\ketbra}[2]{|#1\rangle\langle#2|}
\newcommand{\inpc}[2]{\langle{#1}|{#2}\rangle} 
\newcommand{\tr}{\mbox{\rm Tr}}
\newcommand{\err}{\mbox{\rm err}}
\newcommand{\norm}[1]{\mbox{$\parallel{#1}\parallel$}}
\newcommand{\Cc}{{\mathcal C}} 
\newcommand{\Exp}{\mathbb{E}}
\newcommand{\Ind}{\ensuremath{\mathbf{1}}}
\newcommand{\Id}{\ensuremath{\mathsf{Id}}}
\newtheorem{definition}{Definition}
\newtheorem{theorem}{Theorem}
\newtheorem{lemma}[theorem]{Lemma}
\newtheorem{proposition}[theorem]{Proposition}
\newtheorem{corollary}[theorem]{Corollary}
\def\01{\{0,1\}}
\DeclareMathOperator{\spann}{span}
\let\oldproofname=\proofname
\renewcommand{\proofname}{\rm\bf{\oldproofname}}
\def\makeSkob#1#2#3{%
\def\LLL{\mathopen{}\mathclose\bgroup\left} \def\RRR{\aftergroup\egroup\right}
\expandafter \edef \csname #1\endcsname #2##1#3{\SkobInner}
\def\LLL{} \def\RRR{}
\expandafter \edef \csname #1O\endcsname #2##1#3{\SkobInner}
\def\LLL{\bigl} \def\RRR{\bigr}
\expandafter \edef \csname #1A\endcsname #2##1#3{\SkobInner}
\def\LLL{\Bigl} \def\RRR{\Bigr}
\expandafter \edef \csname #1B\endcsname #2##1#3{\SkobInner}
\def\LLL{\biggl} \def\RRR{\biggr}
\expandafter \edef \csname #1C\endcsname #2##1#3{\SkobInner}
\def\LLL{\Biggl} \def\RRR{\Biggr}
\expandafter \edef \csname #1D\endcsname #2##1#3{\SkobInner}
}
\def \elem[#1]{(#1)}
\def\SkobInner{\LLL(##1\RRR)} \makeSkob{s}[]
\def\SkobInner{\LLL[##1\RRR]} \makeSkob{sk}[]
\def\SkobInner{\LLL\lbrace##1\RRR\rbrace} \makeSkob{sfig}{}{}
\def\SkobInner{\LLL\lfloor##1\RRR\rfloor} \makeSkob{floor}[]
\def\SkobInner{\LLL\lceil##1\RRR\rceil} \makeSkob{ceil}[]
\def\SkobInner{\LLL\langle##1\RRR\rangle} \makeSkob{ip}<>
\def\SkobInner{\LLL\lvert##1\RRR\rvert} \makeSkob{abs}||
\def\SkobInner{\LLL\lVert##1\RRR\rVert} \makeSkob{norm}||
\def\SkobInner{\LLL\lVert##1\RRR\rVert_{\noexpand\mathrm F}} \makeSkob{normFrob}||
\def\SkobInner{\LLL\lVert##1\RRR\rVert_{\noexpand\mathrm{tr}}} \makeSkob{normtr}||
\newcommand{\qqand}{\qquad\text{and}\qquad}
\newcommand{\Johnson}[2]{\mathcal{J}(#1,#2)}
 \DeclareMathOperator{\Var}{Var}
\begin{document}

\title{Quantum Coupon Collector}
\author{Srinivasan Arunachalam\thanks{IBM Research. Part of this work was done while a PhD student at CWI supported by ERC Consolidator Grant 615307-QPROGRESS, and a postdoc at MIT funded by the MIT-IBM Watson AI Lab under the project \emph{Machine learning in Hilbert space}. 
 {\tt Srinivasan.Arunachalam@ibm.com}}
 \and
Aleksandrs Belovs\thanks{Faculty of Computing, University of Latvia.  Supported by the ERDF project number 1.1.1.2/I/16/113. {\tt aleksandrs.belovs@lu.lv}}
 \and
Andrew M.\ Childs\thanks{Department of Computer Science, Institute for Advanced Computer Studies, and Joint Center for Quantum Information and Computer Science, University of Maryland. Supported by the Army Research Office (grant W911NF-20-1-0015); the Department of Energy, Office of Science, Office of Advanced Scientific Computing Research, Quantum Algorithms Teams and Accelerated Research in Quantum Computing programs; and the National Science Foundation (grant CCF-1813814). {\tt amchilds@umd.edu}}
 \quad \and 
Robin Kothari\thanks{Microsoft Quantum and Microsoft Research. {\tt robin.kothari@microsoft.com}}
 \and
Ansis Rosmanis\thanks{Graduate School of Mathematics, Nagoya University, Japan. Supported by the JSPS International Research Fellowship program and by the JSPS KAKENHI Grant Number JP19F19079. {\tt ansis.rosmanis@math.nagoya-u.ac.jp}}
 \and
 Ronald de Wolf\thanks{QuSoft, CWI and University of Amsterdam, the Netherlands. Partially supported by ERC Consolidator Grant 615307-QPROGRESS (which ended Feb 2019), and by the Dutch Research Council (NWO) through Gravitation-grant Quantum Software Consortium 024.003.037, and QuantERA project QuantAlgo 680-91-034. {\tt rdewolf@cwi.nl}}
}
\maketitle

\begin{abstract}
We study how efficiently a $k$-element set $S\subseteq[n]$ can be learned from a uniform superposition $\ket{S}$ of its elements.
One can think of $\ket{S}=\sum_{i\in S}\ket{i}/\sqrt{|S|}$ as the quantum version of a uniformly random sample over $S$, as in the classical analysis of the ``coupon collector problem.'' We show that if $k$ is close to $n$, then we can learn $S$ using asymptotically fewer quantum samples than random samples. In particular, if there are $n-k=O(1)$ missing elements then $O(k)$ copies of $\ket{S}$ suffice, in contrast to the $\Theta(k\log k)$ random samples needed by a classical coupon collector. On the other hand, if $n-k=\Omega(k)$, then $\Omega(k\log k)$ quantum samples are~necessary.

More generally, we give tight bounds on the number of quantum samples needed for every~$k$ and $n$, and we give efficient quantum learning algorithms. We also give tight bounds in the model where we can additionally reflect through $\ket{S}$. Finally, we relate coupon collection to a known example separating proper and improper PAC learning that turns out to show no separation in the quantum case.
\end{abstract}
\section{Introduction}

Learning from quantum states is a major topic in quantum machine learning. While this task has been studied extensively~\cite{bshouty:quantumpac,servedio&gortler:equivalencequantumclassical,atici&servedio:qlearning,atici&servedio:testing,arunachalam:optimalpaclearning,grilo:LWEeasy,arunachalam:qexactlearning}, many fundamental questions about the power of quantum learning remain. Determining properties of quantum states has potential applications not only in the context of machine learning, but also as a basic primitive for other types of quantum algorithms and for quantum information processing more~generally.

In this paper we study a very simple and natural quantum learning problem. We are given copies of the uniform superposition 
$$
\ket{S}:=\frac{1}{\sqrt{|S|}}\sum_{i\in S}\ket{i} 
$$
over the elements of an unknown set $S\subseteq [n] := \{1,\ldots,n\}$ (sometimes referred to as a uniform \emph{quantum sample}
from $S$~\cite{ATS03}). Assume we know the size $k := |S| < n$. Our goal is to learn~$S$ exactly. How many copies of $\ket{S}$ do we need for this? And given the information-theoretically minimal number of copies needed, can we learn $S$ gate-efficiently (i.e., using a quantum circuit with gate count polynomial in $k$ and $\log n$)?

As a warm-up, first consider what happens if we just measure our copies of $\ket{S}$ in the computational basis, giving uniform samples from~$S$.
How many such samples do we need before we learn $S$? As long as there is some element of~$S$ that we have not seen, we cannot even guess $S$ with constant success probability, so we need to sample until we see all $k$ distinct elements. This is known as the ``coupon collector problem.'' Analyzing the required number of samples is easy to do in expectation, as follows. Suppose we have already seen $i<k$ distinct elements from~$S$. Then the probability that we see a new element in the next sample is $(k-i)/k$, and the expected number of samples to see an $(i+1)$st element is the reciprocal of that probability, $k/(k-i)$. By linearity of expectation we can add this up over all $i$ from $0$ to $k-1$, obtaining the expected number of samples to see all $k$ elements:
$$
\sum_{i=0}^{k-1}\frac{k}{k-i}=k\sum_{j=1}^k\frac{1}{j}\sim k\ln k.
$$
With a bit more work one can show that $\Theta(k\log k)$ samples are necessary and sufficient to identify $S$ with high probability \cite[Chapter~3.6]{motwani1995randomized}:

\begin{proposition}[Classical coupon collector]\label{prop:classical}
Given uniformly random samples from a set $S\subseteq[n]$ of size $k<n$, the number of samples needed to identify $S$ with high probability is $\Theta(k \log k)$.
\end{proposition}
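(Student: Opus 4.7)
The plan is to establish the upper and lower bounds separately.

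For the upper bound, consider the learner that takes $m = (1+\eps) k \ln k$ samples $x_1,\dots,x_m$ and outputs $T := \{x_1,\dots,x_m\}$. For each $j\in S$, the probability that $j$ appears in none of the samples is $(1-1/k)^m \le e^{-m/k} = k^{-(1+\eps)}$, and a union bound over $S$ gives $\Pr[T\ne S]\le k^{-\eps}=o(1)$. Thus $O(k\log k)$ samples suffice.

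For the lower bound, I would pass to an average-case analysis: the worst-case success probability is at most the expected success probability when $S$ is drawn uniformly from the $k$-subsets of $[n]$. Under this prior, the likelihood $(1/k)^m$ is the same for every $k$-set containing $T := \{x_1,\dots,x_m\}$, so the posterior on $S$ is uniform over such subsets. When $|T|=k'<k$, there are $\binom{n-k'}{k-k'}\ge (n-k)+(k-k')\ge 2$ candidates, so the learner's best guess succeeds with probability at most $1/2$. Hence the overall success probability is bounded above by $\Pr[|T|=k]+\tfrac12\Pr[|T|<k]=\tfrac12+\tfrac12\Pr[|T|=k]$.

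It remains to show $\Pr[|T|=k]=o(1)$ when $m=(1-\eps)k\ln k$. Let $X=\sum_{j\in S}\mathbf{1}[j\notin\{x_1,\dots,x_m\}]$ count the missing elements; then $\mathbb{E}[X]=k(1-1/k)^m=k^{\eps}(1+o(1))\to\infty$, while $\mathbb{E}[X^2]=\mathbb{E}[X]+k(k-1)(1-2/k)^m$. Since $(1-2/k)^m/(1-1/k)^{2m}=(1-1/(k-1)^2)^m\to 1$ (because $m/k^2\to 0$), a short calculation gives $\operatorname{Var}(X)=o(\mathbb{E}[X]^2)$, and Chebyshev yields $\Pr[X=0]=o(1)$. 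Combining the two paragraphs above yields total success probability $\tfrac12+o(1)$, proving the matching $\Omega(k\log k)$ bound. The main obstacle is this second-moment calculation; it is entirely routine, but one can instead invoke the standard coupon-collector concentration from Motwani--Raghavan to close the argument.
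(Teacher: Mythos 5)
Your proof is correct. Note, though, that the paper does not actually prove this proposition: it treats the classical coupon collector bound as a known result, gives only the heuristic expected-value computation $\sum_{i=0}^{k-1} k/(k-i) \sim k\ln k$ in the introduction, and cites Motwani--Raghavan (Chapter 3.6 and Theorem 3.8) for the high-probability version. So you are supplying an argument the paper deliberately outsources. Your upper bound is the standard union-bound calculation. Your lower bound is a clean Bayesian/minimax argument: averaging over a uniform prior on $k$-subsets, observing that the posterior is uniform over the $\binom{n-k'}{k-k'}\ge 2$ candidates consistent with the observed set $T$ when $|T|<k$, and then using a second-moment bound to show $\Pr[|T|=k]=o(1)$ at $m=(1-\eps)k\ln k$; this is a legitimate and self-contained route (the textbook treatment instead derives sharp asymptotics for $\Pr[X=0]$ via inclusion--exclusion/Poissonization). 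One small simplification available to you: since $(1-2/k)\le(1-1/k)^2$, the indicators $\mathbf{1}[j\notin\{x_1,\dots,x_m\}]$ are negatively correlated, so $\Var(X)\le\Exp[X]$ immediately and the ratio computation $(1-1/(k-1)^2)^m\to1$ is not needed. Also be aware that your lower bound only caps the success probability at $\tfrac12+o(1)$ rather than $o(1)$, but that suffices to rule out ``high probability'' identification as the proposition requires.
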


The relationship between the probability of seeing all elements of $S$ and the number of samples is extremely well understood. In particular, we can achieve probability arbitrarily close to $1$ using only $k \ln k + O(k)$ samples~\cite[Theorem 3.8]{motwani1995randomized}.

Of course, measuring $\ket{S}$ in the computational basis is not the only approach a quantum computer could take. The goal of this paper is to identify when and how we can do better, reducing the number of copies of~$\ket{S}$ that are used to solve this ``quantum coupon collector problem.'' It turns out that we can asymptotically beat the classical threshold of $\Theta(k\log k)$ if and only if the number $m=n-k$ of ``missing elements'' is small (whereas classically the parameter~$m$ is irrelevant). Specifically, we give a simple, gate-efficient quantum algorithm that learns $S$ from $O(n\log(m+1))$ copies of~$\ket{S}$. For small $m$ this is significantly more efficient than classical coupon collection. In particular, for $m=O(1)$ we only need $O(k)$ quantum samples, saving a factor of $O(\log k)$.

As we explain in Section~\ref{sec:properPAC}, this result is relevant for the comparison of \emph{proper} and \emph{improper} learning in the PAC model. A ``proper'' learner is one that only outputs hypotheses from the same concept class that its target function comes from. The coupon collector problem can be viewed as a learning task where the sample complexity of proper learners from classical random examples is asymptotically higher than that of proper learners from quantum examples.

We also prove \emph{lower} bounds on the number~$T$ of copies needed, using the general (i.e., negative-weights) adversary bound of quantum query complexity~\cite{hls:madv}.
This approach may be surprising, since no queries are involved when trying to learn $S$ from copies of $\ket{S}$.
However, the adversary bound also characterizes the quantum query complexity of ``state conversion''~\cite{lmrss:stateconv} and ``state discrimination.'' Our learning problem may be viewed as the problem of converting the state $\ket{S}^{\otimes T}$ to a basis state that gives a classical description of the $k$-set $S$.
To employ the general adversary bound, we exploit the underlying symmetries of the problem using the mathematical machinery of association schemes (see also \cite{AMRR11,LR20} for prior uses of association schemes in proving adversary lower bounds).
Using this, we show that, unless the number of missing elements $m=n-k$ is very small, the $O(k\log k)$ classical coupon collector algorithm is essentially optimal even in the quantum case. This means that the quantum coupon collector might as well just measure the copies of the state in the computational basis, unless $m$ is very small.

We also study the situation where, in addition to copies of $\ket{S}$, we can also apply a unitary operation $R_S=2\ketbra{S}{S}-\Id$ that reflects through the state~$\ket{S}$ (i.e., $R_S\ket{S}=\ket{S}$ and $R_S\ket{\phi}=-\ket{\phi}$ for all states $\ket{\phi}$ orthogonal to $\ket{S}$). This model is reasonable to consider because if we had a unitary that prepared $\ket{S}$, or even $\ket{S}\ket{\psi}$ for some garbage state $\ket{\psi}$, starting from some canonical state~$\ket{0}$, then we could use this unitary to create the unitary $R_S$ in a black-box manner. For example, if $U\ket{0}=\ket{S}$, then $R_S = U(2\ketbra{0}{0}-\Id)U^\dagger$.

This model gives us extra power and enables more efficient learning of the set~$S$: $\Theta(\sqrt{km})$ states and reflections are necessary and sufficient to learn $S$ for large $k$ (i.e., small $m$), and $\Theta(k)$ states and reflections are necessary and sufficient for small $k$.

The following table summarizes our main results. Sections~\ref{sec:upperbound} and~\ref{sec:lowerbound} prove the upper and lower bounds in the first row, respectively, while Section~\ref{sec:numberofreflections} proves the results in the second row.

\renewcommand{\arraystretch}{1.2}
\renewcommand{\tabcolsep}{10pt}

\begin{table}[ht]
\centering
\begin{tabular}[t]{p{8em}p{13em}p{13em}}
\toprule
& $k\geq n/2$ & $k\leq n/2$ \\
\midrule
Number of copies of $\ket{S}$: 
& $\Theta(k\log(m+1))$ 
\newline Theorem \ref{thm:uppersample} and Theorem \ref{thm:lowersample} 
& $\Theta(k\log k)$ 
\newline Proposition \ref{prop:classical} and Theorem \ref{thm:lowersample}\\[1ex]
Number of copies and reflections:
& $\Theta(\sqrt{km})$ \newline
Theorem \ref{thm:uppersamplerefsmallm} and Theorem \ref{thm:lowersamplerefsmallm}
& $\Theta(k)$ \newline
Theorem \ref{thm:uppersamplerefsmallk} and Theorem \ref{thm:lowersamplerefsmallk}\\
\bottomrule
\end{tabular}
\caption{Main results about the complexity of learning the set $S$ with $m=n-k$ missing elements}
\label{tab:mainresults}
\end{table}%
\renewcommand{\arraystretch}{1}

We contrast our work with recent results on the quantum query complexity of approximate counting by Aaronson, Kothari, Kretschmer, and Thaler~\cite{akkt:laurent}. They consider a similar model, given copies of the state $\ket{S}$, the ability to reflect through $\ket{S}$, and also the ability to query membership in~$S$. However, in their work the size of $S$ is unknown and the goal is to approximately \emph{count} this set up to small multiplicative error. They obtain tight upper and lower bounds on the complexity of this approximate-counting task using techniques quite different from ours (specifically, Laurent polynomials for the lower bounds). This allows them to give an oracle separation between the complexity classes SBP and QMA.
In contrast, in our case the size~$k$ of the set~$S$ is already known to the learner from the start, and the goal is to \emph{identify}~$S$ exactly.

\section{Upper bound on quantum samples}\label{sec:upperbound}

In this section we prove upper bounds on the number of copies of $\ket{S}$ that suffice to identify the $k$-element set $S\subseteq[n]$ with high probability.

The easiest way to recover $S$ is by measuring $O(k\log k)$ copies of $\ket{S}$ in the computational basis. By the classical coupon collector problem (Proposition \ref{prop:classical}), we will (with high probability) see all elements of $S$ at least once. As we will show later, this number of copies of $\ket{S}$ turns out to be asymptotically optimal if the number of missing elements $m=n-k$ is large (at least polynomial in~$n$). However, here we show that something better is possible for very small~$m$.

\begin{theorem}[Upper bound for small $m$]\label{thm:uppersample}
Let $S\subseteq [n]$ be a set of size $k<n$ and let $m=n-k$. We can identify $S$ with high probability using $O(k \log (m+1))$ copies of $\ket{S}$ by a gate-efficient quantum~algorithm.
\end{theorem}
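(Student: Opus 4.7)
The plan is to split into two regimes based on the size of $m$. When $m \geq k/6$, I simply measure $O(k\log k)$ copies of $\ket{S}$ in the computational basis and apply the classical coupon-collector bound (Proposition~\ref{prop:classical}); since $\log(m+1) = \Theta(\log k)$ in this range, this matches the claimed $O(k\log(m+1))$ bound. The interesting regime is $1 \leq m < k/6$, where I use a SWAP-test subroutine to bias sampling towards the missing set $\bar{S} := [n]\setminus S$.

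The subroutine uses one copy of $\ket{S}$. Prepare $\ket{[n]} = \frac{1}{\sqrt{n}}\sum_i \ket{i}$ in register~A (gate-efficient by a standard $O(\log n)$-gate state preparation), place $\ket{S}$ in register~B, and perform a SWAP test. The antisymmetric outcome has probability $\frac{1}{2}(1 - |\braket{[n]|S}|^2) = m/(2n)$; conditioned on it, measure register~A in the computational basis. The main claim is that the outcome is distributed as $P(i) = 1/(2m)$ for $i\in\bar{S}$ and $P(i) = 1/(2k)$ for $i\in S$, so $\bar{S}$ is oversampled by a factor of $k/m$. To prove this, note that the conditional joint state is proportional to $\ket{[n]}\ket{S} - \ket{S}\ket{[n]}$; tracing out register~B gives a reduced state proportional to $\ket{[n]}\bra{[n]} + \ket{S}\bra{S} - \sqrt{k/n}\bigl(\ket{[n]}\bra{S} + \ket{S}\bra{[n]}\bigr)$, whose diagonal in the computational basis evaluates cleanly via $\braket{i|[n]} = 1/\sqrt{n}$ and $\braket{i|S} = \Ind[i\in S]/\sqrt{k}$ to give the stated probabilities.

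Run the subroutine independently on each of $T = Ck\log(m+1)$ copies of $\ket{S}$ for a sufficiently large constant $C$, then output the $m$ elements with the largest empirical counts as the guess for $\bar{S}$. Each $i \in \bar{S}$ has expected count $\mu_{\bar S} = T/(4n) = \Omega(C\log(m+1))$ (using $k \geq n/2$ in this regime), so a Chernoff lower-tail bound places it above $T/(8n) = \mu_{\bar S}/2$ except with probability $\exp(-\Omega(C\log(m+1)))$; a union bound over $m$ elements leaves $o(1)$ failure for large enough $C$. Each $i \in S$ has expected count $(m/k)\mu_{\bar S}$, a factor $k/m \geq 6$ smaller; the large-deviation upper-tail bound $P(X \geq \alpha\mu) \leq (e/\alpha)^{\alpha\mu}$ with $\alpha = k/(2m) \geq 3$ gives $P(\text{count above }T/(8n)) \leq (2em/k)^{T/(8n)}$, and the union bound over $k$ elements is absorbed by the exponent since $\log(m+1)\cdot\log(k/m) = \Omega(\log k)$ uniformly across $1 \leq m < k/6$ (the product attains its minimum $\Theta(\log k)$ at the extremes $m=1$ and $m\approx k/6$). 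Gate efficiency is immediate: each subroutine call uses $O(\log n)$ elementary gates, and the classical counting/sorting runs in $\poly(k,\log n)$ time. The main obstacle is the clean derivation of the key distributional claim — once this enriched sampler is set up, the rest is a routine concentration argument whose only delicate point is the uniform bound $\log(m+1)\log(k/m) = \Omega(\log k)$ throughout the relevant range.
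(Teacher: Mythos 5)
Your proposal is correct, and it follows the same overall strategy as the paper --- use the overlap of $\ket{S}$ with the uniform state $\ket{[n]}$ to produce, with probability $\Theta(m/n)$ per copy, a sample that is biased toward the missing set $\overline{S}$, and then coupon-collect $\overline{S}$ with a thresholding step to reject the residual probability mass on $S$ --- but the filtering subroutine and the quantitative details differ. The paper applies the two-outcome projective measurement $\{\ketbra{[n]}{[n]},\Id-\ketbra{[n]}{[n]}\}$ directly to $\ket{S}$; conditioned on the second outcome the post-measurement state is $\sqrt{m/n}\,\ket{S}-\sqrt{k/n}\,\ket{\overline{S}}$, so a computational-basis measurement gives each missing element probability $k/(nm)$ versus $m/(nk)$ for each present element, a bias ratio of $k^2/m^2$. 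Your SWAP test against a fresh $\ket{[n]}$ yields (as you correctly compute) conditional probabilities $1/(2m)$ versus $1/(2k)$, a bias ratio of only $k/m$; this is quantitatively weaker but still suffices. The weaker ratio matters because you also choose a more aggressive regime split ($m<k/6$ rather than the paper's $m\le n^{1/4}$), so in the hardest part of your range the separation is only a constant factor $\ge 6$ and your concentration argument must be correspondingly sharper --- hence the uniform bound $\log(m+1)\log(k/m)=\Omega(\log k)$, which is the genuinely delicate point of your writeup and which does hold (split on $m\le\sqrt{k}$ versus $m>\sqrt{k}$). The paper avoids this delicacy entirely: with $m\le n^{1/4}$ the two probabilities differ by a polynomial factor in $n$, so a crude Chernoff-plus-Markov argument closes the gap. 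Your route costs a somewhat longer concentration analysis but is self-contained over the whole range of $m$; the paper's is shorter because its filter is stronger and its regime split more conservative. Both are gate-efficient for the same reasons.
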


\begin{proof}
This bound is trivial when $m$ is polynomial in $n$, since an upper bound of $O(k \log k)$ follows from Proposition \ref{prop:classical}. So let us now assume that $m\leq n^{1/4}$ and hence $k\geq n-n^{1/4}$.

Consider the uniform superposition over all elements of the universe $[n]$:
$$
\Ket{[n]}=\frac{1}{\sqrt{n}}\sum_{i\in[n]}\ket{i}.
$$
Performing the 2-outcome projective measurement with operators $\ket{[n]}\bra{[n]}$ and $\Id-\ketbra{[n]}{[n]}$ is no harder than preparing $\ket{[n]}$,
so it can be implemented gate-efficiently.
If we apply this measurement to a copy of $\ket{S}$, then we get the first outcome with probability $|\inpc{S}{[n]}|^2=k/n$ and the second outcome with probability~$m/n$. In the latter case, the post-measurement state is
$$
\ket{\psi}=\sqrt{\frac{m}{n}}\ket{S}-\sqrt{\frac{k}{n}}\ket{\overline{S}}
$$
which is close to $-\ket{\overline{S}}$ if $m\ll n$.

We use an expected number of $\smash{O\bigl(\frac{n}{m}\cdot m\log(m+1)\bigr)}=O(n\log(m+1))$ copies of $\ket{S}$ to prepare $O(m\log(m+1))$ copies of $\ket{\psi}$.
If $\ket{\psi}$ were exactly equal to $-\ket{\overline{S}}$, then measuring in the computational basis would sample uniformly over the set $\overline{S}$ of $m$ missing elements, and $O(m\log(m+1))$ such samples suffice to recover $\overline{S}$ by the classical coupon collector problem (Proposition \ref{prop:classical}).
Instead,~$\ket{\psi}$ only approximately equals $-\ket{\overline{S}}$: if we measure it then each $i\in\overline{S}$ has probability $\frac{k}{nm}$, while each $i\in S$ has (much smaller but nonzero) probability $\frac{m}{nk}$. Suppose we prepare and measure $T=10m\log(m+1)$ copies of $\ket{\psi}$.
Then the expected number of occurrences of each $i\in\overline{S}$ is $T\cdot \frac{k}{nm}\geq 5\log(m+1)$ since $k\geq n/2$, while the expected number of occurrences of each $i\in S$ is $T\cdot \frac{m}{nk}=O(\log(n)/n^{3/2})$.
In both cases the number of occurrences is tightly concentrated.\footnote{Suppose we flip $T$ 0/1-valued coins, each taking value 1 with probability~$p$. Let $X$ be their sum (i.e., the number of 1s), which has expectation $\mu=pT$. The Chernoff bound implies $\Pr[X\leq(1-\delta)\mu]\leq \exp(-\delta^2\mu/2)$. To get concentration for the number of occurrences of a specific $i\in\overline{S}$, apply this tail bound with $p=k/(nm)$, $\mu=Tp\geq 5\log(m+1)$, $\delta=4/5$ to obtain $\Pr[X\leq \log(m+1)]\ll 1/m$. Hence, by a union bound, the probability that among the $m$ elements $i\in\overline{S}$ there is one of which we see fewer than $\log(m+1)$ occurrences, is $\ll 1$.
For an $i\in S$, by Markov's inequality the probability to see at least $\log(m+1)$ occurrences of this~$i$ among the $T$ samples is $\ll 1/n$, and we can use a union bound over all $i\in S$.} 
Hence if we keep only the elements that appear, say, at least $\log(m+1)$ times among the $T$ outcomes, then with high probability we will have found $\overline{S}$, and hence learned $S=[n]\setminus \overline{S}$. 
\end{proof}

\section{Lower bound on quantum samples}\label{sec:lowerbound}

In this section we prove lower bounds on the number of copies of $\ket{S}$ needed to identify $S$ with high probability.
Before establishing the lower bounds claimed in Table \ref{tab:mainresults}, we introduce some preliminary concepts, namely the $\gamma_2$-norm (Section~\ref{ssec:gamma2norm}), association schemes (Section~\ref{ssec:assoc}), the Johnson scheme (Section~\ref{ssec:johnson}), and the adversary bound for state discrimination (Section~\ref{ssec:advLB}). The lower bound itself is established in Section~\ref{sec:lowersample}.


\subsection{\texorpdfstring{$\gamma_2$}{Gamma2}-norm}\label{ssec:gamma2norm}
 The $\gamma_2$-norm of a $D_1\times D_2$ matrix $A$ with entries $A\elem[x,y]$ for $x \in [D_1]$ and $y \in [D_2]$ can be defined in two equivalent ways~\cite[Section~3]{belovs:phd}. 
The primal definition is 
\begin{alignat}{2}
\label{eq:primalgamma2}
\begin{aligned}
&\mbox{\rm minimise} &\quad& \max \sfigB{ \max\nolimits_{x\in [D_1]} \norm|u_x|^2, \max\nolimits_{y\in [D_2]} \norm|v_y|^2 } \\
& \mbox{\rm subject to}&&  
A\elem[x,y] = \braket{u_x, v_y} \qquad \text{\rm for all $x\in [D_1]$ and $y\in [D_2]$},
\end{aligned}
\end{alignat} 
where $\{u_x : x \in [D_1]\}$ and $\{v_y: y \in [D_2]\}$ are vectors of the same dimension.
The dual definition is
\begin{alignat}{2}
\begin{aligned}
\label{eq:dualgamma2}
&\mbox{\rm maximise} &\quad& \| \Gamma\circ A\| \\
& \mbox{\rm subject to}&&  \|\Gamma\|\le 1
\end{aligned}
\end{alignat}
where $\Gamma$ ranges over $D_1\times D_2$-matrices, $\circ$ denotes the Hadamard (entrywise) product of matrices, and $\|\cdot \|$ is the spectral norm of a matrix.
Note that $\gamma_2(A\circ B)\leq\gamma_2(A)\gamma_2(B)$: consider the vectors obtained from the  optimal feasible solutions of $\gamma_2(A)$ and $\gamma_2(B)$ in Eq.~\eqref{eq:primalgamma2} and observe that the tensor product of these vectors forms a feasible solution for the primal problem for $\gamma_2(A\circ B)$ with (not necessarily minimal) value $\gamma_2(A)\gamma_2(B)$.

\subsection{Association schemes} \label{ssec:assoc}

Here we present a quick introduction to association schemes (see, for example, \cite[Chapter~1]{godsil:assoc2018} for a more thorough treatment).
\begin{definition}
An \emph{association scheme} on the set $U$ is a finite set of real symmetric $U\times U$ matrices $\{A_0,A_1,\dots, A_s\}$ satisfying all the following properties:
\begin{itemize}
\item each $A_j$ only has entries 0 and 1;
\item $A_0$ is the identity matrix $\Id$;
\item $\sum_{j=0}^s A_j$ is the all-1 matrix~$J$; and
\item for every $i$ and $j$, the product $A_i A_j$ is a linear combination of the matrices $\{A_0,\dots,A_s\}$.
\end{itemize}
The space spanned by the set $\{A_0,A_1,\dots, A_s\}$ forms an algebra, which is called the \emph{Bose--Mesner algebra} corresponding to the scheme. By abuse of terminology, we may also refer to this algebra as the association~scheme.
\end{definition}

We now state a few properties of $\{A_0,\ldots,A_s\}$.  
First, observe that $A_j$ has zero diagonal for $j>0$.  Additionally, $\{A_0,\ldots,A_s\}$ form a basis of the corresponding Bose--Mesner algebra, since for every $(x,y)$, there is exactly one $j$ for which $A_j(x,y)\ne 0$. Also, the basis $\{A_0,\dots,A_s\}$ satisfies
$
A_i\circ A_j = \Ind_{[i = j]} A_i
$,
where $\Ind_{[P]}$ is the indicator function of predicate~$P$ (i.e., $1$ if $P$ is true and $0$ if $P$ is false).
It is possible to find another basis $\{E_0,\dots,E_s\}$  consisting of  \emph{idempotent} matrices for $\spann\{A_0,\dots,A_s\}$ that satisfy 
$
E_iE_j = \Ind_{[i = j]} E_i,
$
 with respect to the usual product of matrices. The operators $E_i$ are orthogonal projectors onto the \emph{eigenspaces} of the association scheme.
We have 
\[
E_0 = J/N
\qqand
\sum_{j=0}^s E_j = \Id,
\]
where $N = |U|$.
Since both $\{A_i\}$ and $\{E_j\}$ are bases for the space of $N\times N$ matrices, it is possible to~write
\begin{align}
    \label{eq:eigenvaluesanddual}
A_i = \sum_{j=0}^s p_i(j)E_j
\quad\text{and}\quad
E_j = \sum_{i=0}^s \frac{q_j(i)}{N}A_i,
\end{align}
where $p_i(j)$ and $q_j(i)$ are called the \emph{eigenvalues} and \emph{dual eigenvalues} of the association scheme, respectively.

It is easy to show that the Hadamard product and the usual product of any two elements of the association scheme also belong to the association scheme. Clearly for every $i,j$ we know that $A_i\circ A_j$ and $E_i\cdot E_j$ are elements of the basis of the scheme. Also observe that $A_i\cdot A_j$ and $E_i\circ E_j$ are elements of the scheme by writing out these products using Eq.~\eqref{eq:eigenvaluesanddual} and observing that $A_i\cdot A_j$ (resp.~$E_i\circ E_j$) is a linear combination of elements of $\{A_0,\ldots,A_s\}$ (resp.~$\{E_0,\ldots,E_s\}$). In particular, we can write 
\begin{align}
\label{eq:hadamardproductE}
E_i\circ E_{j} = \frac{1}{N}\sum_{\ell=0}^s q_{i,j}(\ell) E_{\ell}.
\end{align}
The real numbers $q_{i,j}(\ell) $ are called the \emph{Krein parameters} of the association scheme.

\subsection{Johnson scheme}
\label{ssec:johnson}

In the Johnson association scheme $\Johnson{n}{k}$, the set $U$ is the set of all $k$-subsets of $[n]$.
Therefore, $N=|U|=\binom{n}{k}$.
Let $m = \min\{k,n-k\}$.
For $j=0,1,\dots,m$, define $A_j\elem[x,y]:=\Ind_{[|x\cap y|=k-j]}$. The idempotent $E_j$ is defined as follows:
for $x\in U$, let $e_x\in\mathbb{R}^U$ be the indicator vector defined as $e_x(y)=\Ind_{[x=y]}$ for $y\in U$, and let
\[
\mathcal{V}_j := \begin{cases}
\mathrm{span}\big\{\sum_{x\supseteq z}e_x\colon z\subseteq[n] \text{ with } |z|=j\big\}
& \text{if } k\le n/2,\\
\mathrm{span}\big\{\sum_{x\subseteq z}e_x\colon z\subseteq[n] \text{ with } |z|=n-j\big\}
& \text{if } k> n/2,\\
\end{cases}
\]
where the sums are over $x\in U$. These spaces satisfy  $\mathcal{V}_0\subset\mathcal{V}_1\subset\cdots\subset\mathcal{V}_m=\mathbb{R}^U$ and the dimension of $\mathcal{V}_j$ is $\binom{n}{j}$.
For $j\in\{1,\ldots,m\}$, the idempotent $E_j$ is defined as the orthogonal projector on $\mathcal{V}_j\cap \mathcal{V}_{j-1}^{\perp}$, and $E_0$ is the orthogonal projector on $\mathcal{V}_0$.
Hence, for $j\in\{0,1,\ldots,m\}$, the dimension of the $j$th eigenspace is
\begin{align}
    \label{eq:defnoftraceEj}
d_j := \tr[E_j] = \binom{n}{j} - \binom{n}{j-1}.
\end{align}
We do not require explicit expressions for most eigenvalues and valencies of $\Johnson{n}{k}$, the only exceptions being the dual eigenvalues
\begin{equation}
\label{eqn:dualEigenvalues}
q_0(i)=1
\quad\text{and}\quad
q_1(i)
= \frac{n(n-1)}{n-k}\bigg(\frac{k-i}{k}-\frac{k}{n}\bigg).
\end{equation}
See \cite[Eq.~1.24]{RosmanisPhD2014} for the latter.
We are only interested in the following Krein parameters of this association scheme.
When one idempotent is $E_0$, we have
\begin{equation}
\label{eqn:Krein}
q_{i,0}(j)=\Ind_{[i=j]}.
\end{equation}
When one idempotent is $E_1$, we have
\begin{subequations}
\label{eqn:KreinAll}
\begin{align}
\label{eqn:KreinUp}  q_{j-1,1}(j) &=  \frac{j(n-1)n(k-j+1)(m-j+1)}{mk(n-2j+1)(n-2j+2)}, \\
\label{eqn:KreinFlat} q_{j,1}(j) &= \frac{j(n-1)(n-j+1)(m-k)^2}{mk(n-2j)(n-2j+2)},\\
\label{eqn:KreinDown} q_{j+1,1}(j)& =  \frac{n(n-1)(n-j+1)(k-j)(m-j)}{mk(n-2j)(n-2j+1)}, 
\end{align}
\end{subequations}
and $q_{i,1}(j)=0$ whenever $|i-j|>1$ (see \cite[Section 3.2]{BannaiI84}).

\subsection{Adversary lower bound for state discrimination}\label{ssec:advLB}

Consider the following state-discrimination problem.
\begin{itemize}
\item[($\ast$)] Let $f\colon D\to R$ be a function for some finite sets $D$ and $R$.
Let $\{\ket{\psi_x} : x \in D\}$ be a family of quantum states of the same dimension. Given a copy of $\ket{\psi_x}$ for an arbitrary $x \in D$, the goal is to determine $f(x)$ with high success probability.
\end{itemize}
Let $A$ be the Gram matrix of the states, namely
\[
A\elem[x,y] = \braket{\psi_x\;|\;\psi_y},
\]
and let $F$ be the $D\times D$ matrix with
\[
F\elem[x,y] = \Ind_{[f(x)\ne f(y)]}.
\]
Informally, the main result of this section
is that the above state-discrimination problem can be solved with small error probability if and only if 
\[
\gamma_2 (A\circ F)
\]
is small.
We start with the proof of the lower bound.  With constants refined, it reads as follows:

\begin{proposition}
\label{proposition:gamma2Upper}
If the above state-discrimination problem ($\ast$) can be solved with success probability $1-\eps$, then
$
\gamma_2 (A\circ F) \le 4\sqrt{\eps}
$.
\end{proposition}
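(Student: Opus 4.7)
The plan is to turn a successful discrimination protocol into a feasible primal solution for $\gamma_2(A\circ F)$ in~\eqref{eq:primalgamma2}. By the principle of deferred measurement, I may assume the protocol consists of appending an ancilla in state $\ket{0}$, applying a unitary $U$, and measuring a designated output register with projectors $\{\Pi_r\}_{r\in R}$ that are pairwise orthogonal and sum to $\Id$. Setting $\ket{\phi_x} = U(\ket{\psi_x}\otimes\ket{0})$, unitarity preserves inner products, so $A\elem[x,y] = \braket{\phi_x|\phi_y}$, and the success hypothesis reads $\|\Pi_{f(x)}\ket{\phi_x}\|^2 \ge 1-\eps$ for every $x\in D$.

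Next, I would split each state as $\ket{\phi_x}=\ket{g_x}+\ket{b_x}$ with $\ket{g_x} := \Pi_{f(x)}\ket{\phi_x}$ and $\ket{b_x} := (\Id-\Pi_{f(x)})\ket{\phi_x}$, so that $\|b_x\|^2\le\eps$ and $\|g_x\|^2\le 1$. Expanding $\braket{\phi_x|\phi_y}$ produces four cross terms, and two algebraic identities drive the rest, both following from $\Pi_r\Pi_s=\Ind_{[r=s]}\Pi_r$: (i)~$\braket{g_x|g_y}=0$ whenever $F\elem[x,y]=1$, since $\Pi_{f(x)}\Pi_{f(y)}=0$; and (ii)~$\braket{g_x|b_y}=\braket{b_x|g_y}=0$ whenever $F\elem[x,y]=0$, since $\Pi_{f(x)}(\Id-\Pi_{f(x)})=0$. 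Using~(i) to remove the $\braket{g_x|g_y}$ term from $F\elem[x,y]A\elem[x,y]$, and using~(ii) to eliminate the $F\elem[x,y]$ prefactor on the two mixed terms, yields
\[
(A\circ F)\elem[x,y] = \braket{g_x|b_y} + \braket{b_x|g_y} + F\elem[x,y]\,\braket{b_x|b_y}.
\]

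It remains to realise this right-hand side as $\braket{U_x|V_y}$ with small norms. I would build $\ket{U_x}$ and $\ket{V_y}$ as direct sums over four registers:
\begin{align*}
\ket{U_x} &= \eps^{1/4}\ket{g_x}\,\oplus\,\eps^{-1/4}\ket{b_x}\,\oplus\,\ket{b_x}\,\oplus\,\bigl(-\ket{b_x}\otimes\ket{f(x)}\bigr),\\
\ket{V_y} &= \eps^{-1/4}\ket{b_y}\,\oplus\,\eps^{1/4}\ket{g_y}\,\oplus\,\ket{b_y}\,\oplus\,\ket{b_y}\otimes\ket{f(y)}.
\end{align*}
Registers~1 and~2 contribute the mixed terms after the $\eps^{\pm 1/4}$ factors cancel, while registers~3 and~4 combine to $\braket{b_x|b_y}-\Ind_{[f(x)=f(y)]}\braket{b_x|b_y}=F\elem[x,y]\,\braket{b_x|b_y}$. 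The squared norms satisfy
\[
\|U_x\|^2 \;\le\; \sqrt{\eps}\,\|g_x\|^2 \,+\, \eps^{-1/2}\|b_x\|^2 \,+\, 2\|b_x\|^2 \;\le\; 2\sqrt{\eps} + 2\eps \;\le\; 4\sqrt{\eps}
\]
for $\eps\le 1$, and likewise for $\|V_y\|^2$. The primal formulation~\eqref{eq:primalgamma2} then gives $\gamma_2(A\circ F)\le 4\sqrt{\eps}$.

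The only real subtlety is the bad--bad term $\braket{b_x|b_y}$: unlike the mixed terms, it does \emph{not} automatically vanish on diagonal blocks where $F\elem[x,y]=0$. Register~4 is engineered solely to cancel this unwanted contribution via the label tag $\ket{f(x)}$, at the cost of doubling that term's contribution to the norm budget. Once this cancellation trick is in place, the $\eps^{\pm 1/4}$ rescaling of the good and bad parts is the standard AM--GM balancing that converts the $\|b_x\|\le\sqrt{\eps}$ bound into a $\sqrt{\eps}$ bound on the $\gamma_2$-norm.
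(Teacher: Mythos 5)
Your proof is correct and follows essentially the same route as the paper's: decompose each state into its ``good'' part $\Pi_{f(x)}\ket{\psi_x}$ and ``bad'' part $\Pi_{f(x)}^\perp\ket{\psi_x}$, observe that the good--good term vanishes exactly where $F=1$, and balance the remaining terms with $\eps^{\pm 1/4}$ weights to get a primal witness of value $O(\sqrt{\eps})$. The only (cosmetic) difference is that the paper handles the residual Hadamard factor $F$ via the submultiplicativity $\gamma_2(B\circ F)\le\gamma_2(B)\gamma_2(F)$ together with $\gamma_2(F)\le 2$, whereas you absorb it inline through the label-tagged fourth register, which amounts to unrolling the same $\gamma_2(F)\le 2$ witness into a single direct factorization of $A\circ F$.
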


\begin{proof}
This is essentially the result of~\cite[Claim 3.27]{belovs:phd}, which is also closely related to~\cite{hls:madv}.
For completeness we repeat the proof, with slight modifications.

\newcommand{\Pip}{\Pi^\perp}
Without loss of generality we may assume the measurement is projective (this follows from Neumark's theorem).
Thus, there exist orthogonal projectors $\{\Pi_a\}_{a\in R}$ such that $\norm|\Pi_{f(x)}\ket{\psi_x}|^2\ge 1-\eps$ for all $x\in D$.  
Denote $\Pip_a = \Id - \Pi_a$, so that $\|\Pip_{f(x)}\ket{\psi_x}\|^2\le \eps$ for all $x\in D$. We first write
\begin{align*}
A\elem[x,y]=\braket{\psi_x\;|\;\psi_y} 
&=
 \bra{\psi_x}\Pi_{f(y)}\ket{\psi_y} 
 + \bra{\psi_x} \Pi_{f(y)}^\perp \ket{\psi_y} \\
&=
 \bra{\psi_x} \Pi_{f(x)} \Pi_{f(y)} \ket{\psi_y} 
 + \bra{\psi_x} \Pip_{f(x)} \Pi_{f(y)} \ket{\psi_y} 
 + \bra{\psi_x} \Pip_{f(y)} \ket{\psi_y}
.
\end{align*}
Note that if $f(x)\ne f(y)$, then the first term is 0 because $\Pi_{f(x)}$ and $\Pi_{f(y)}$ project onto orthogonal subspaces.  
This motivates us to define the $D\times D$ matrix
\begin{align*}
B\elem[x,y] 
&= \bra{\psi_x} \Pip_{f(x)} \Pi_{f(y)} \ket{\psi_y} + \bra{\psi_x} \Pip_{f(y)} \ket{\psi_y}.
\end{align*}
We have $A(x,y)=B(x,y)$ whenever $f(x)\ne f(y)$, and hence $A\circ F=B\circ F$. 
Note that $\gamma_2(B) \le 2\sqrt{\eps}$ by taking the vectors $u_x=\big({\eps^{-1/4}} \Pip_{f(x)}\ket{\psi_x}\;,\;\eps^{1/4} \ket{\psi_x}\big)$ and $v_y=\big({\eps^{1/4}}\Pi_{f(y)}\ket{\psi_y} \;,\;\eps^{-1/4}\Pip_{f(y)}\ket{\psi_y}\big)$.
Now we have
\[
\gamma_2(A\circ F) = \gamma_2(B\circ F) \le \gamma_2(B)\gamma_2(F)\le 4\sqrt{\eps},
\]
where we used the composition property of the $\gamma_2$-norm in the first inequality and in the second inequality we used $\gamma_2(F)\le 2$, which follows by considering the vectors $u_x,v_y\in\01^{|R|+1}$ whose last coordinate is always~1, and where $u_x$ has a $1$ at coordinate $f(x)$ and $v_y$ has a $-1$ at coordinate $f(y)$ (identifying $R$ with $\{1,\ldots,|R|\}$ for the purposes of indexing these vectors), and whose remaining entries are all~$0$.
\end{proof}

\begin{proposition}
\label{proposition:gamma2Lower}
The above state-discrimination problem ($\ast$) can be solved with success probability at least $1-\gamma_2(A\circ F)$.
\end{proposition}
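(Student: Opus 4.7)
The plan is to turn an optimal primal factorization of $\gamma_2(A\circ F)$ into an explicit POVM and then verify the claimed success probability. Write $\gamma := \gamma_2(A\circ F)$ and, using the primal program~\eqref{eq:primalgamma2}, fix vectors $\{u_x\}_{x\in D}$ and $\{v_y\}_{y\in D}$ in some common auxiliary space $\mathcal K$ with
\[
(A\circ F)[x,y]=\braket{u_x|v_y} \qqand \max_x\|u_x\|^2,\;\max_y\|v_y\|^2 \le \gamma.
\]

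First, I would observe that the ``target'' Gram matrix $A_= := A - A\circ F$ is positive semidefinite: after permuting rows and columns so that indices sharing a common value of $f$ are grouped together, $A_=$ becomes block diagonal with blocks that are principal submatrices of the PSD matrix $A$. Hence $A_=$ is the Gram matrix of some unit vectors $\{\ket{\phi_x}\}$ with $\braket{\phi_x|\phi_y}=0$ whenever $f(x)\ne f(y)$, and these ``ideal'' states could be discriminated perfectly by projecting onto the $f$-class subspaces.

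Next, I would glue the given states to these ideal ones inside the common Hilbert space $\mathcal H\oplus\mathcal K$ via
\[
\ket{\Psi_x} := \ket{\psi_x}\oplus(-\ket{u_x}), \qquad \ket{\Phi_y} := \ket{\psi_y}\oplus\ket{v_y}.
\]
A direct calculation yields the crucial cross-Gram identity $\braket{\Psi_x|\Phi_y} = A[x,y] - \braket{u_x|v_y} = A_=[x,y]$, which vanishes whenever $f(x)\ne f(y)$. Writing $W:\mathcal H \hookrightarrow \mathcal H\oplus\mathcal K$ for the canonical inclusion $\ket\psi\mapsto\ket\psi\oplus 0$ and setting $\ket{u'_x} := 0\oplus\ket{u_x}$, we then have the three identities
\[
W\ket{\psi_x} = \ket{\Psi_x}+\ket{u'_x}, \qquad \braket{\Psi_x|u'_x} = -\|u_x\|^2, \qquad \|\Psi_x\|^2 = 1+\|u_x\|^2.
\]

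Finally, given orthogonal projectors $\{P_a\}_{a\in R}$ on a (possibly further enlarged) Hilbert space into which $W$ is extended by zero, satisfying $\sum_a P_a\le\Id$ and $P_{f(x)}\ket{\Psi_x}=\ket{\Psi_x}$ for every $x\in D$, the operators $M_a := W^*P_a W$ form a sub-POVM on $\mathcal H$, and the success probability on input $x$ equals
\[
\|P_{f(x)}W\ket{\psi_x}\|^2 = \|\Psi_x\|^2 + 2\operatorname{Re}\braket{\Psi_x|u'_x} + \|P_{f(x)}\ket{u'_x}\|^2 = 1-\|u_x\|^2+\|P_{f(x)}\ket{u'_x}\|^2 \ge 1-\gamma,
\]
where I use $P_{f(x)}\ket{\Psi_x}=\ket{\Psi_x}$ to replace $\braket{\Psi_x|P_{f(x)}u'_x}$ by $\braket{\Psi_x|u'_x}$ and the bound $\|u_x\|^2\le\gamma$ for the last step. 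A ``fail'' outcome $\Id-\sum_a M_a$ turns this into a genuine POVM.

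The main obstacle is the construction of the orthogonal projectors $\{P_a\}$. The naive choice of projecting onto $\spann\{\ket{\Psi_x}:f(x)=a\}$ fails because these spans are not pairwise orthogonal: only the cross-Gram matrix $\braket{\Psi_x|\Phi_y}$ vanishes across $f$-classes, not the self-Gram $\braket{\Psi_x|\Psi_y}$. Following \cite[Claim 3.27]{belovs:phd}, this is resolved by enlarging the Hilbert space further --- attaching a ``tag'' register indexed by $R$ and dilating $W$ into it using $\{u_x,v_y\}$ in a way that preserves both the membership $P_{f(x)}\ket{\Psi_x}=\ket{\Psi_x}$ and the cross-Gram identity above. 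Once this extension is in place, the short computation above delivers the $1-\gamma$ bound.
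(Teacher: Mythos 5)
Your final computation is internally consistent: \emph{granting} pairwise orthogonal projectors $\{P_a\}$ with $P_{f(x)}\ket{\Psi_x}=\ket{\Psi_x}$, the three identities $W\ket{\psi_x}=\ket{\Psi_x}+\ket{u'_x}$, $\braket{\Psi_x|u'_x}=-\|u_x\|^2$ and $\|\Psi_x\|^2=1+\|u_x\|^2$ do yield success probability at least $1-\gamma$. The problem is that such projectors cannot exist for the vectors you defined, no matter how the space is enlarged. If $P_a P_b=0$ for $a\ne b$ and each $\ket{\Psi_x}$ lies in the range of $P_{f(x)}$, then $\braket{\Psi_x|\Psi_y}=\bra{\Psi_x}P_{f(x)}P_{f(y)}\ket{\Psi_y}=0$ whenever $f(x)\ne f(y)$. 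But your construction gives $\braket{\Psi_x|\Psi_y}=A(x,y)+\braket{u_x|u_y}$: only the \emph{cross}-Gram with the $\ket{\Phi_y}$ vanishes across classes, and there is no reason an optimal factorization should satisfy $\braket{u_x|u_y}=-\braket{u_x|v_y}$. Embedding into a larger Hilbert space preserves inner products, so it cannot repair this; the only way out is to replace the $\ket{\Psi_x}$ themselves (e.g.\ by class-orthogonal tagged vectors), and then the identity $W\ket{\psi_x}=\ket{\Psi_x}+\ket{u'_x}$ for a single $x$-independent isometry $W$ is exactly what needs to be proved. The step you defer to ``a dilation of $W$ using $\{u_x,v_y\}$'' is therefore the entire content of the proposition, not a technicality.

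For comparison, the paper avoids constructing the measurement by hand: it takes $B$ to be the Gram matrix of the manifestly class-orthogonal states $\ket{\psi_x}\otimes\ket{f(x)}$, observes that $A-B=A\circ F$, and invokes the state-conversion result \cite[Claim 3.10]{lee:strongDirect} to obtain a unitary $U$ with $\bigl(\bra{\psi_x}\otimes\bra{f(x)}\bigr)U\bigl(\ket{\psi_x}\otimes\ket{0}\bigr)\ge 1-\eps/2$ for $\eps=\gamma_2(A\circ F)$; measuring the tag register then succeeds with probability $(1-\eps/2)^2\ge 1-\eps$. That cited claim is itself proved by a factorization argument in the spirit of yours, but the passage from a $\gamma_2$-factorization to an $x$-independent isometry landing near the tagged states is a genuine lemma. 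To turn your sketch into a proof you must either prove that lemma or cite it explicitly; as written, the argument has a gap precisely where the difficulty lies.
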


\begin{proof}
If $B$ is the Gram matrix of the collection of states $\{\ket{\psi_x}\otimes\ket{f(x)}\}_{x\in D}$, then
\[
A - B = A\circ F.
\]
Using \cite[Claim 3.10]{lee:strongDirect}, there exists a unitary $U$ such that
\[
\bigl(\bra{\psi_x}\otimes\bra{f(x)}\bigr) \, U \, \bigl(\ket{\psi_x}\otimes\ket{0}\bigr) \ge 1-\eps/2
\]
where $\eps := \gamma_2(A\circ F)$.
Thus, if we measure the second register of $U(\ket{\psi_x}\otimes \ket{0})$, we get $f(x)$ with probability at least $(1-\eps/2)^2 \ge 1-\eps$.
\end{proof}



\subsection{Lower bound}
\label{sec:lowersample}

For $x\subseteq [n]$ of size $k$, let
\[
\ket{\psi_x} = \frac1{\sqrt k} \sum_{i\in x}\ket{i}.
\]
This is what we denoted by $\ket{S}$ earlier ($x=S$); we use $\ket{\psi_x}$ here for consistency with the common notation in lower bounds.
The task is to identify the subset $x$ using as few copies of the state $\ket{\psi_x}$ as possible.
We prove the following lower bound.

\begin{theorem}
\label{thm:lowersample}
To find $x$ with success probability $\Omega(1)$, it is necessary to have $\Omega\s[k\log(\min\{k,n-k\})]$ copies of the state $\ket{\psi_x}$.
\end{theorem}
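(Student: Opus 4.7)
The plan is to apply Proposition~\ref{proposition:gamma2Upper} to the state discrimination problem of identifying $x$ from $\ket{\psi_x}^{\otimes T}$ (the target function being the identity on $\binom{[n]}{k}$). The Gram matrix
\[
A(x,y)=\braket{\psi_x|\psi_y}^T=\left(\frac{|x\cap y|}{k}\right)^T
\]
lies in the Bose--Mesner algebra of $\Johnson{n}{k}$: $A=\sum_{j=0}^{m}(1-j/k)^T A_j$. Since the function matrix satisfies $F=J-\Id$, we have $A\circ F=A-\Id=\sum_{j=1}^m (1-j/k)^T A_j$. The theorem reduces to showing $\gamma_2(A-\Id)=\Omega(1)$ whenever $T=o(k\log m')$, where $m'=\min\{k,n-k\}$.

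To lower bound $\gamma_2$, I would use the dual formulation~\eqref{eq:dualgamma2} and, exploiting the $S_n$-symmetry of the problem, restrict the adversary matrix $\Gamma$ to the Bose--Mesner algebra: $\Gamma=\sum_\ell c_\ell E_\ell$ with $\|\Gamma\|=\max_\ell|c_\ell|\le 1$. The identity $E_\ell\circ A_i=\frac{q_\ell(i)}{N}A_i$ puts $\Gamma\circ(A-\Id)$ back in the algebra, with eigenvalue on $E_j$ equal to $\frac{1}{N}\sum_{i=1}^m\bigl(\sum_\ell c_\ell q_\ell(i)\bigr)(1-i/k)^T p_i(j)$. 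Substituting $p_i(j)=v_i q_j(i)/d_j$ and optimizing the signs $c_\ell\in\{\pm 1\}$ for a chosen $j$ yields
\[
\gamma_2(A-\Id)\ \geq\ \max_j\ \sum_\ell \left|\frac{1}{Nd_j}\sum_{i=1}^m v_i\, q_\ell(i)\, q_j(i)\left(1-\tfrac{i}{k}\right)^T\right|.
\]

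To push this quantity to $\Omega(1)$ for $T=o(k\log m')$, I would construct a specific combination of the $E_\ell$'s using the dual eigenvalues~\eqref{eqn:dualEigenvalues} and Krein parameters~\eqref{eqn:KreinAll}. Crucially, the Krein parameters are tridiagonal with respect to $E_1$ (i.e., $q_{i,1}(j)=0$ for $|i-j|>1$), which suggests an inductive ``ladder'' construction: starting from a seed at $\ell=0$ and repeatedly Hadamard-multiplying by $E_1$ to hop up the chain $\ell=0,1,\ldots,m'$, one builds an adversary witness whose Hadamard product against $A-\Id$ aggregates $\Theta(m')$ nonzero contributions. The resulting bound takes the form of a sum $\sum_{\ell=1}^{m'}\rho_\ell(1-\ell/k)^T$ with positive weights $\rho_\ell$, which remains $\Omega(1)$ precisely in the regime $T=O(k\log m')$ --- mirroring the classical coupon-collector identity $\sum_{\ell=1}^{m'}1/\ell=\Theta(\log m')$.

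The principal obstacle I anticipate is executing this construction of $\Gamma$ and carrying out the estimates of the weighted moments $\sum_i v_i q_\ell(i)q_j(i)(1-i/k)^T$. Since the $q_\ell(i)$ are polynomials of degree $\ell$ in $i$ with intricate sign patterns, obtaining the matching $k\log m'$ threshold requires delicate control of these sums and of how the cancellations produced by the tridiagonal Krein structure interact with the exponential weight $(1-i/k)^T$. The two regimes $k\le n/2$ (with $m'=k$) and $k\ge n/2$ (with $m'=n-k$) should fit into the same framework, with $m'$ playing the role of the natural ``height'' of the ladder, though separate choices of witness for each regime may be needed to obtain the sharp constants.
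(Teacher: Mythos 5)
You set the problem up exactly as the paper does: reduce to lower-bounding $\gamma_2\bigl(\Psi^{\circ T}\circ(J-\Id)\bigr)$ via Proposition~\ref{proposition:gamma2Upper}, pass to the dual formulation~\eqref{eq:dualgamma2}, and restrict the witness $\Gamma$ to the Bose--Mesner algebra of $\Johnson{n}{k}$ so that everything is governed by the Krein parameters. Up to that point you are on the paper's track. But the entire content of the proof is the step you defer --- choosing $\Gamma$ and proving $\|\Gamma\circ\Psi^{\circ T}\|=\Omega(1)$ out to $T=\Theta(k\log m)$ --- and the endgame you sketch for it cannot work as stated. If your final bound really had the form $\sum_{\ell=1}^{m'}\rho_\ell(1-\ell/k)^T$ with fixed positive weights, then evaluating the same witness at $T=0$ shows $\sum_\ell\rho_\ell\le\gamma_2(J-\Id)\le 2$, whereupon at $T=ck\log m'$ the whole sum is at most $2(1-1/k)^{T}=O\bigl((m')^{-c}\bigr)$, not $\Omega(1)$. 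No bounded positive combination of the quantities $(1-\ell/k)^T$ can exhibit a $k\log m'$ threshold, so the analogy with $\sum_\ell 1/\ell=\Theta(\log m')$ is not the mechanism by which the logarithm arises; and the sign optimization over $c_\ell\in\{\pm1\}$ buys nothing, since the optimal witness uses no alternating signs at all.

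The mechanism the paper actually uses is different in a way worth internalizing. After padding the universe so that one may assume $m\le k$ (enlarging $n$ with $k$ fixed cannot make the problem easier, which is how $\min\{k,n-k\}$ is handled --- your ``same framework for both regimes'' claim is sidestepped rather than proved), the witness is almost trivial: $\Gamma=\sum_{j=0}^m\gamma_jE_j$ with $\gamma_0=\cdots=\gamma_{m-1}=1$ and $\gamma_m\in[-1,0]$ fixed only so that $\tr[\Gamma]=0$. The key fact (Lemma~\ref{lem:circPsi}) is that $\frac1N\Psi=\frac kn E_0+\frac{n-k}{n(n-1)}E_1$, so the Krein relations give $E_j\circ\Psi=p_{j+1,-1}E_{j+1}+p_{j,0}E_j+p_{j-1,+1}E_{j-1}$ with \emph{nonnegative} coefficients summing to $1$. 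Hence Hadamard multiplication by $\Psi$ acts on the coefficient vector $\bigl(\gamma_j^{(s)}\bigr)_j$ as the transition matrix of a birth--death chain on $\{0,\dots,m\}$, and $\gamma_0^{(s)}$ is the expectation of a random-walk functional that stays at least $1/2$ as long as the walk started at $0$ has not hit $m$. A monotone coupling reduces this to a pure-birth chain whose absorption time is a sum of independent geometrics with means $1/p_{j,+1}=\Theta(k/(m-j))$, giving expectation $\Theta(k\log(m+1))$ and variance $\Theta(k^2)$; Chebyshev finishes. The coupon-collector identity thus enters as a \emph{hitting time} of a Markov chain on the eigenspace index, not as a weighted sum of decaying eigenvalue powers. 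Your proposal has the raw ingredients (the tridiagonality of $q_{\cdot,1}$ and the ladder $E_0\to E_1\to\cdots$) but is missing this probabilistic reinterpretation, which is precisely what makes the estimate tractable, as well as the estimate itself.
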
 

Let $m = n-k$.
Since we could add more elements to the ambient space artificially, the problem becomes no easier as $n$ grows with $k$ fixed.
Thus, it suffices to prove the lower bound of $\Omega(k\log(m+1))$ under the assumption $m\ll k$.
\medskip

Define the Gram matrix $\Psi$ by 
$
\Psi\elem[x,y] = \braket{\psi_x \; | \; \psi_y}.
$
The Gram matrix corresponding to $\ket{\psi_x}^{\otimes\ell}$ is $\Psi^{\circ\ell}$ (where $\Psi^{\circ \ell}$ is the Hadamard product of $\Psi$ with itself $\ell$ times).
The function we want to compute is $f:x\mapsto x$, so we have $F(x,y)=\Ind_{[f(x)\neq f(y)]}=\Ind_{[x\neq y]}$, i.e., $F=J-\Id$.
By Proposition~\ref{proposition:gamma2Upper}, it thus suffices to prove that for some $\ell = \Omega(k\log(m+1))$ we have
\[
\gamma_2 \sA[ \Psi^{\circ\ell}\circ (J-\Id) ] = \Omega(1).
\]
To that end, we use the dual formulation of the $\gamma_2$-norm (in Eq.~\eqref{eq:dualgamma2}) and construct a matrix $\Gamma$ such that
\[
\|\Gamma\| = 1,\qquad
\Gamma\circ \Id = 0,\qqand
\|\Gamma\circ \Psi^{\circ\ell}\|=\Omega(1).
\]
We now construct a $\Gamma$ that satisfies the constraints above. To do so, we first write~$\Gamma$ in terms of the idempotents $\{E_j\}_{j=0}^m$ of the Johnson association scheme (as defined above Eq.~\eqref{eq:defnoftraceEj}): for $\{\gamma_j\}_j$ which we define shortly, let 
\begin{equation}
\label{eqn:Gamma}
\Gamma = \sum_{j=0}^m \gamma_j E_j.
\end{equation}
To satisfy $\Gamma\circ \Id=0$, we would like $\Gamma$ to have zero diagonal. Note that $\Gamma$ has zero diagonal if and only if $\tr[\Gamma]= \sum_{j=0}^m \gamma_j \tr[E_j]= \sum_{j=0}^m \gamma_j d_j=0$, where $d_j$ was defined in Eq.~\eqref{eq:defnoftraceEj}. We now fix $\{\gamma_j\}_j$ as follows: since  $d_m= \binom{n}{m} - \binom{n}{m-1}$ is larger than the sum of the remaining $d_j$s, we let
\begin{align}
 \label{eq:defnofgamma1tom}   
\gamma_0 = \gamma_1 = \dots = \gamma_{m-1}=1,
\qquad
\gamma_{m}\in[-1,0]
\end{align}
so that $\tr[\Gamma]=0$ and $\|\Gamma\|=1$.
Thus, it remains to show that 
\begin{equation}
\label{eqn:whatWeNeed}
\|\Gamma\circ \Psi^{\circ\ell}\|=\Omega(1).
\end{equation}
For that, we use the following technical result.

\begin{lemma}
\label{lem:circPsi}
For each $j=0,1,\dots,m$, we have
\[
E_j\circ\Psi = p_{j+1,-1} E_{j+1} + p_{j,0}E_j + p_{j-1,+1}E_{j-1},
\]
where
\begin{align*}
p_{j,-1} &= \frac{j(k-j+1)(m-j+1)}{(n-2j+1)(n-2j+2)k}, \\
p_{j,0}  &= \frac{k}{n} + \frac{j(n-j+1)(m-k)^2}{nk(n-2j)(n-2j+2)},\\
p_{j,+1} &= \frac{(n-j+1)(k-j)(m-j)}{(n-2j)(n-2j+1)k}.
\end{align*}
\end{lemma}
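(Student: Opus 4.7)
The plan is to first express $\Psi$ as a short linear combination of idempotents $E_0,E_1$, and then apply the Hadamard-product rule from Eq.~\eqref{eq:hadamardproductE}. Since $\Psi(x,y)=|x\cap y|/k$ equals $(k-i)/k$ whenever $|x\cap y|=k-i$, we have $\Psi=\sum_{i=0}^{m}\tfrac{k-i}{k}A_i$. The dual eigenvalue $q_1(i)$ in Eq.~\eqref{eqn:dualEigenvalues} is affine in $i$, and together with $q_0\equiv 1$ spans all affine functions of $i$; inverting the affine relation gives $(k-i)/k=\tfrac{m}{n(n-1)}q_1(i)+\tfrac{k}{n}q_0(i)$. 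Combining with $NE_j=\sum_i q_j(i)A_i$ from Eq.~\eqref{eq:eigenvaluesanddual} then yields
$$\Psi=\tfrac{kN}{n}E_0+\tfrac{Nm}{n(n-1)}E_1.$$
As a sanity check, the coefficient of $E_0$ matches the constant row sum $kN/n$ of $\Psi$, while the coefficient of $E_1$ is then forced by $\tr\Psi=N$.

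Next, I would compute $E_j\circ\Psi$ term by term. The identity $E_j\circ E_0=\tfrac{1}{N}E_j$ is immediate from Eq.~\eqref{eqn:Krein}. For $E_j\circ E_1$, Eq.~\eqref{eq:hadamardproductE} together with the tridiagonal vanishing $q_{j,1}(r)=0$ for $|r-j|>1$ (noted below Eq.~\eqref{eqn:KreinAll}) reduces the sum to $\tfrac{1}{N}\sum_{r\in\{j-1,j,j+1\}}q_{j,1}(r)E_r$. The mismatch to bridge is that Eq.~\eqref{eqn:KreinAll} tabulates $q_{r,1}(j)$ rather than the needed $q_{j,1}(r)$.

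To convert, I would invoke the three-index symmetry $q_{i,j}(\ell)\,d_\ell=q_{\ell,j}(i)\,d_i$. This follows from the manifestly $(i,j,\ell)$-symmetric quantity $\sum_{x,y}E_i(x,y)E_j(x,y)E_\ell(x,y)$ combined with the trace identity $\tr\!\bigl((E_i\circ E_j)E_\ell\bigr)=\tfrac{d_\ell}{N}q_{i,j}(\ell)$, which in turn follows from Eq.~\eqref{eq:hadamardproductE} and $E_rE_\ell=\delta_{r\ell}E_\ell$. Applied with $(i,j,\ell)=(j,1,j\pm 1)$, this gives $q_{j,1}(j\pm 1)=\tfrac{d_j}{d_{j\pm 1}}q_{j\pm 1,1}(j)$; the ratios $d_j/d_{j\pm 1}$ simplify via $d_j=\binom{n}{j}\tfrac{n-2j+1}{n-j+1}$.

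Assembling $E_j\circ\Psi=\tfrac{k}{n}E_j+\tfrac{m}{n(n-1)}\bigl(q_{j,1}(j-1)E_{j-1}+q_{j,1}(j)E_j+q_{j,1}(j+1)E_{j+1}\bigr)$ and substituting the Krein parameters from Eqs.~\eqref{eqn:KreinUp}--\eqref{eqn:KreinDown}, the diagonal coefficient $\tfrac{k}{n}+\tfrac{m}{n(n-1)}q_{j,1}(j)$ is $p_{j,0}$ essentially by inspection. The off-diagonal coefficients require cancelling common factors between $d_j/d_{j\pm 1}$ and the tabulated Krein parameters; this algebra is the main (though routine) obstacle, and I expect it to collapse to $p_{j+1,-1}$ and $p_{j-1,+1}$, respectively. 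Boundary cases $j=0,m$ are handled by the convention $E_{-1}=E_{m+1}=0$, consistent with the vanishing $(m-j)$ factor in the coefficient of $E_{j+1}$ at $j=m$.
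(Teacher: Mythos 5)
Your proposal is correct and follows essentially the same route as the paper: the same decomposition $\Psi=\frac{kN}{n}E_0+\frac{mN}{n(n-1)}E_1$ obtained from the dual eigenvalues in Eq.~\eqref{eqn:dualEigenvalues}, followed by the tridiagonal Krein expansion of $E_j\circ E_1$. The only divergence is your conversion step, where the perceived index mismatch is not actually there---the formulas in Eq.~\eqref{eqn:KreinAll} hold for every $j$ in range and can simply be re-indexed (e.g.\ $q_{j,1}(j+1)=q_{(j+1)-1,1}(j+1)$ is Eq.~\eqref{eqn:KreinUp} evaluated at $j+1$, and $q_{j,1}(j-1)$ is Eq.~\eqref{eqn:KreinDown} at $j-1$)---so your detour through the symmetry $q_{i,j}(\ell)\,d_\ell=q_{\ell,j}(i)\,d_i$ and the ratios $d_j/d_{j\pm1}$ is valid (it reproduces the same values) but unnecessary.
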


Before we proceed with the proof of this lemma, let us state a simple consequence.

\begin{corollary}
For each $j\in\{0,\dots,m\}$, the numbers $p_{j,-1}$, $p_{j,0}$, and $p_{j,+1}$ are non-negative, and satisfy $p_{j,-1} + p_{j,0} + p_{j,+1} = 1$.
\end{corollary}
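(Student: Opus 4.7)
The plan is to dispatch the two claims separately; neither should require real computation. For non-negativity, I would just inspect the closed-form formulas of Lemma~\ref{lem:circPsi} factor by factor, using the standing assumption $0 \le j \le m = \min\{k, n-k\}$. Since $m \le n/2$, every denominator ($k$, $n-2j+1$, $n-2j+2$, $n-2j$) is strictly positive, with the single exception $j=m=n/2$, and every numerator factor ($j$, $k-j+1$, $m-j+1$, $k-j$, $m-j$, $n-j+1$, $(m-k)^2$) is non-negative. The lone edge case $j=m=n/2$ can occur only in $p_{m,+1}$, where the numerator factor $m-j$ already vanishes, so the natural convention $p_{m,+1}=0$ is forced (and consistent with the fact that $E_{m+1}$ does not exist).

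For the normalization $p_{j,-1}+p_{j,0}+p_{j,+1}=1$, I would sum the identity of Lemma~\ref{lem:circPsi} over $j=0,1,\dots,m$ and read off coefficients on each side. On the left,
\[
\sum_{j=0}^m (E_j \circ \Psi) = \Bigl(\sum_{j=0}^m E_j\Bigr)\circ \Psi = \Id\circ\Psi = \Id,
\]
where the first equality uses the distributivity of $\circ$, the second uses the resolution of identity for the Johnson scheme, and the last uses that $\Psi$ has ones on the diagonal because $\braket{\psi_x|\psi_x}=1$. On the right, after shifting the summation indices in the $E_{j+1}$ and $E_{j-1}$ terms, and using the boundary-vanishing discussed above, every $E_j$ picks up coefficient exactly $p_{j,-1}+p_{j,0}+p_{j,+1}$. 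Comparing the two expressions to $\Id=\sum_j E_j$ and invoking the linear independence of the idempotents $E_0,\dots,E_m$ (which project onto nonzero, mutually orthogonal subspaces, since $d_j>0$ for $0\le j\le m$) yields the claimed identity.

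I do not expect a serious obstacle; the whole argument hinges on the two structural facts $\Id\circ\Psi=\Id$ and $\sum_j E_j=\Id$. The only place that takes any care is the bookkeeping at the boundaries $j=0$ and $j=m$, but in both cases the closed-form expressions force the ``out-of-range'' term ($p_{0,-1}$ or $p_{m,+1}$) to vanish, so the re-indexing telescopes cleanly and no spurious contributions appear.
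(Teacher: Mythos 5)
Your proposal is correct and follows essentially the same route as the paper: the normalization identity is obtained from the chain $\sum_j E_j=\Id=\Psi\circ\Id=\Psi\circ\sum_j E_j=\sum_j(p_{j,-1}+p_{j,0}+p_{j,+1})E_j$ together with linear independence of the idempotents, and non-negativity is read off from the explicit formulas (the paper simply calls it obvious, while you additionally check the boundary cases). No gaps.
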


\begin{proof}
The non-negativity is obvious.
For the last property note that
\begin{equation*}
\sum_{j=0}^m E_j = \Id = \Psi\circ \Id = \Psi\circ\sC[\sum_{j=0}^m E_j] = \sum_{j=0}^m \sA[p_{j,-1} + p_{j,0} + p_{j,1}] E_j, 
\end{equation*}
where the first equality uses the definition  of an association scheme, the second equality follows because $\Psi\elem[x,x]=1$ by definition, and the last equality is by the assumption of Lemma~\ref{lem:circPsi}.
\end{proof}

\begin{proof}[\bfseries \upshape Proof of Lemma~\ref{lem:circPsi}]
It suffices to write out $\Psi$ in the basis $\{E_j\}_{j=0}^m$ and use the Krein parameters.
By definition of $\ket{\psi_x}=\frac{1}{\sqrt{k}}\sum_{i\in x}\ket{i}$, we have that $\Psi\elem[x,y]=\braket{\psi_x \; | \; \psi_y}$ equals $\frac{1}{k}$ times the intersection of~$x$ and $y$, and
\[
\Psi = \sum_{i=0}^m \sB[1-\frac ik] A_i,
\]
where $A_i$ was defined at the beginning of Section~\ref{ssec:johnson} as $A_i\elem[x,y]:=\Ind_{[|x\cap y|=k-i]}$.
We now rewrite $\Psi$ as follows: using Eq.~\eqref{eqn:dualEigenvalues}, 
we have
\[
\frac kn E_0 + \frac{n-k}{n(n-1)} E_1 = \frac1N \sum_{i=0}^m \sB[ \frac kn q_0(i) + \frac{n-k}{n(n-1)} q_1(i)] A_i = \frac1N\sum_{i=0}^m \frac {k-i}k A_i = \frac1N\Psi,
\]
where the first equality used Eq.~\eqref{eq:eigenvaluesanddual}. 
Additionally observe that
\[
N E_j\circ E_0 = q_{j,0}(j) E_j
\qqand
N E_j\circ E_1 = q_{j,1}(j-1) E_{j-1} + q_{j,1}(j) E_j + q_{j,1}(j+1) E_{j+1}.
\]
Plugging in the values of $q_{j,\cdot}$ from Eq.~\eqref{eqn:KreinAll}, we get the required equality.
\end{proof}

We are now ready to prove our main lower bound in Theorem~\ref{thm:lowersample}. 

\begin{proof}[\bfseries \upshape Proof of Theorem~\ref{thm:lowersample}]
We prove this by induction on the number of copies of the state $\ket{\psi_x}$, which we denote by $s$.
Let us define $\gamma^{(s)}_j$ via
\[
\Gamma\circ\Psi^{\circ s} = \sum_{j=0}^m \gamma^{(s)}_j E_j.
\]
Since the $E_j$ are pairwise-orthogonal projections, the norm of $\Gamma\circ\Psi^{\circ s}$ equals $\max_j |\gamma^{(s)}_j|$.
Hence to lower bound $\|\Gamma\circ\Psi^{\circ s} \|$, it suffices to lower bound $\gamma^{(s)}_0$.

We have
\[
\Gamma \circ \Psi^{\circ (s+1)} = \sum_{j=0}^m \gamma_j^{(s)} E_j \circ \Psi
\]
and using Lemma~\ref{lem:circPsi} we get
\begin{equation}
\label{eqn:gamma_j^{(s+1)}}
\gamma_j^{(s+1)} = p_{j,-1} \gamma_{j-1}^{(s)} + 
p_{j,0} \gamma_{j}^{(s)} +
p_{j,+1} \gamma_{j+1}^{(s)}.
\end{equation}
For every $j\in \{0,\ldots,m\}$, we now consider the following probabilistic sequence  $\{B_j^{(s)}\}$.
For $s=0$, we let $B_j^{(0)} = \gamma_j$ and
\[
B_j^{(s+1)} =
\begin{cases}
B_{j-1}^{(s)}&\text{with probability $p_{j,-1}$},\\
B_{j}^{(s)}&\text{with probability $p_{j,0}$},\\
B_{j+1}^{(s)}&\text{with probability $p_{j,+1}$},
\end{cases}
\]
using the fact that $p_{j,-1} + p_{j,0} + p_{j,+1} = 1$. 
Note that $B_j^{(s)}$ only takes values from $\{\gamma_0,\ldots,\gamma_m\}$ and there are only two distinct such values, namely 1 and $\gamma_m$ (since $\gamma_0 = \gamma_1 = \dots = \gamma_{m-1}=1$ as defined in Eq.~\eqref{eq:defnofgamma1tom}).
Also note that $p_{0,-1}=p_{m,+1}=0$, so we do not have to explicitly handle the boundaries.
Induction on~$s$ using Eq.~\eqref{eqn:gamma_j^{(s+1)}} shows that $\Exp[B_j^{(s)}] = \gamma_j^{(s)}$, which is the motivation behind defining these variables.

Define similarly $C_j^{(s)}$ as $C_j^{(0)} = \gamma_j$ and
\[
C_j^{(s+1)} =
\begin{cases}
C_{j}^{(s)}&\text{with probability $p_{j,-1}+p_{j,0}$},\\
C_{j+1}^{(s)}&\text{with probability $p_{j,+1}$}.
\end{cases}
\]
Let us give an intuitive description of how the random variables $C_j^{(s)}$ behave.
For each $s$, the head of the sequence $C^{(s)}_0, C^{(s)}_1,\dots,$ up to some $ C^{(s)}_\ell$ consists purely of 1s, and the tail $C^{(s)}_{\ell+1},\dots,C^{(s)}_m$ consists purely of $\gamma_m$.
Initially, for $s=0$, the tail consists of one element $C^{(s)}_m$ only, but the tail gradually extends as $s$ grows (and the head, respectively, shrinks).
The probability of growing the length of the tail from $m-j$ to $m-j+1$ in one step is $p_{j,+1}$.

The random variables $B_j^{(s)}$ behave similarly, but are slightly more complicated, since the tail can also shrink and 1s can get into the tail.
This is the reason why we replace $B_j^{(s)}$ with $C_j^{(s)}$ in our analysis: $C_j^{(s)}$ is easier to analyze, and it suffices to lower bound its expectation because $B^{(s)}_j$ dominates $C^{(s)}_j$, i.e., for each $s$ and $j$ and real $t$ we have $\Pr[B^{(s)}_j\ge t]\ge \Pr[C^{(s)}_j\ge t]$.
The latter is proven by induction, as follows.  The base case $s=0$ is trivial, and the inductive step is
\begin{align*}
\Pr[B^{(s+1)}_j\ge t] &=
p_{j,-1} \Pr[B^{(s)}_{j-1}\ge t] +
p_{j,0} \Pr[B^{(s)}_{j}\ge t] +
p_{j,+1} \Pr[B^{(s)}_{j+1}\ge t]\\
&\ge
p_{j,-1} \Pr[C^{(s)}_{j-1}\ge t] +
p_{j,0} \Pr[C^{(s)}_{j}\ge t] +
p_{j,+1} \Pr[C^{(s)}_{j+1}\ge t]\\
&\ge
(p_{j,-1}+p_{j,0}) \Pr[C^{(s)}_{j}\ge t] +
p_{j,+1} \Pr[C^{(s)}_{j+1}\ge t]
= \Pr[C^{(s+1)}_j\ge t],
\end{align*}
since $C^{(s)}_{j-1}\ge C^{(s)}_j$ by our above analysis.

The analysis of $C^{(s)}_j$ is very similar to the classical coupon collector problem if we interpret
the length of the tail as the number of acquired coupons.
We briefly repeat the argument.
For each $j$, define random variable $T_j$ as the first value of $s$ such that $C_j^{(s)}=\gamma_m$.
Obviously, $T_m = 0$.
We can interpret~$T_j$ as the first value of $s$ such that the length of the tail becomes $m-j+1$.
The random variable $T_j - T_{j+1}$ is the number of steps required to grow the length of the tail from $m-j$ to $m-j+1$.
Clearly, these variables are independent for different $j$.
Also, each of them is distributed according to a geometric distribution and standard probability theory gives us that 
$\Exp[T_j - T_{j+1}] = 1/p_{j,+1}$
and
$\Var[T_j - T_{j+1}] = (1-p_{j,+1})/p^2_{j,+1}$.
%
%
We have $p_{j,+1} = \Theta((m-j)/k)$ from Lemma~\ref{lem:circPsi}, so
\[
\Exp[T_0] = \sum_{j=0}^{m-1} \frac 1{p_{j,+1}} = \Theta(k)\sC[\sum_{j=0}^{m-1} \frac 1{m-j}] = \Theta(k\log(m+1)).
\]
Similarly,
\[
\Var[T_0] = \sum_{j=0}^{m-1} \frac{1-p_{j+1}}{p_{j,+1}^2} = \Theta(k^2) \sC[\sum_{j=0}^{m-1} \frac 1{(m-j)^2}] = \Theta(k^2).
\]
Hence, using Chebyshev's inequality, there exists $\ell = \Theta(k\log(m+1))$ such that
\[
\Pr[T_0 > \ell] \ge 3/4.
\]
Since $C_0^{(\ell)}$ can take only two values (1 and $\gamma_m\in[-1,0]$), we have that
\[
\gamma_{0}^{(\ell)} = \Exp[C_0^{(\ell)}] \ge 3/4\cdot 1 + 1/4\cdot \gamma_m \ge 1/2.
\]
Finally, since $B^{(\ell)}_0$ dominates $C^{(\ell)}_0$, we get
\[
\gamma_{0}^{(\ell)} = \Exp[B_0^{(\ell)}] \ge \Exp[C_0^{(\ell)}] \ge 1/2,
\]
implying Eq.~\eqref{eqn:whatWeNeed}.
This shows that there exists $\ell = \Theta(k\log(m+1))$ such that the error probability of any measurement on $\ell$ copies of $\ket{\psi_x}$ has error probability $\Omega(1)$ in identifying~$x$.
\end{proof}

\section{Learning from quantum samples and reflections}\label{sec:numberofreflections}

In the previous sections we assumed we were given a number of copies of the unknown state~$\ket{S}$.
In this section we assume a stronger model: in addition to a number of copies of the state~$\ket{S}$, we are also given the ability to apply the reflection $R_S=2\ketbra{S}{S}-\Id$ through $\ket{S}$.  The key additional tool we will use is (exact) amplitude amplification, encapsulated by the next theorem, which follows from~\cite{bhmt:countingj}:

\begin{theorem}[Exact amplitude amplification]
Let $\ket{\phi}$ and $\ket{\psi}$ be states such that $\inpc{\phi}{\psi}=\alpha>0$. Suppose we know $\alpha$ exactly, and we can implement reflections through $\ket{\phi}$ and $\ket{\psi}$.
Then we can convert $\ket{\phi}$ into $\ket{\psi}$ (exactly) using $O(1/\alpha)$ reflections and $\widetilde{O}(1/\alpha)$ other gates.
\end{theorem}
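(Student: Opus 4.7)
The plan is to work in the 2-dimensional subspace $\mathcal{H}_2 := \spann\{\ket{\phi}, \ket{\psi}\}$ and invoke the standard amplitude-amplification analysis. Writing $\ket{\phi} = \alpha\ket{\psi} + \sqrt{1-\alpha^2}\,\ket{\psi^\perp}$ for a unit vector $\ket{\psi^\perp}\in\mathcal{H}_2$ orthogonal to $\ket{\psi}$, and setting $\theta := \arcsin\alpha \in (0,\pi/2]$, a direct calculation shows that $R_\phi R_\psi$ preserves $\mathcal{H}_2$ and acts there as a rotation of the plane; with the standard sign convention the Grover iterate $G := -R_\phi R_\psi$ rotates by angle $2\theta$, and hence
\[
G^t \ket{\phi} = \sin\!\bigl((2t+1)\theta\bigr)\,\ket{\psi} + \cos\!\bigl((2t+1)\theta\bigr)\,\ket{\psi^\perp}
\]
for every non-negative integer $t$. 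Choosing $t$ near $\pi/(4\theta)-1/2$ brings the coefficient of $\ket{\psi}$ close to $1$ using $2t = O(1/\theta) = O(1/\alpha)$ reflections, but in general only approximately.

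For \emph{exact} conversion I would invoke the standard trick from BHMT. Fix $t := \lceil \pi/(4\theta) - 1/2\rceil = O(1/\alpha)$ and set $\theta' := \pi/(4t+2)\in(0,\theta]$, so that $(2t+1)\theta' = \pi/2$ exactly; since $\alpha$ (and hence $\theta$) is known, $\theta'$ can be computed to any required precision. Then run a phase-modified variant of amplitude amplification with effective angle $\theta'$ in place of $\theta$: replace one of the reflections in one of the Grover steps by a generalized phase reflection $I - (1 - e^{i\varphi})\ketbra{\psi}{\psi}$, with $\varphi$ tuned from $\alpha$ to close the residual angular gap. (Equivalently, one can append a single ancilla qubit and use it to artificially lower the effective overlap from $\alpha$ to $\sin\theta'$.) After $t$ iterations of this modified rotation, the state is exactly $\ket{\psi}$ (possibly tensored with an ancilla that is then uncomputed).

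Accounting for cost, the construction uses $O(1/\alpha)$ invocations of $R_\phi$ and $R_\psi$, together with $O(1/\alpha)$ extra single-qubit phase rotations by angles that depend continuously on the known value of~$\alpha$; compiling each such rotation into a fixed universal gate set by Solovay--Kitaev requires $O(\log(1/\alpha))$ elementary gates, giving the $\widetilde{O}(1/\alpha)$ bound on non-reflection gates. The main obstacle I expect is the exactness step itself: naive rounding of $\pi/(4\theta)-1/2$ to an integer leaves a constant error in $\|G^t\ket{\phi}-\ket{\psi}\|$, and eliminating this error requires the phase-modified final step (or the ancilla reformulation), which in turn relies crucially on knowing $\alpha$ exactly rather than only approximately.
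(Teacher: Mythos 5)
Your proposal is a correct reconstruction of the standard BHMT exact amplitude amplification argument (two-dimensional rotation analysis plus the phase-modified final iteration to close the residual angle), which is precisely what the paper relies on: it gives no proof of its own and simply cites Brassard--H{\o}yer--Mosca--Tapp for this theorem. No gaps; the one point worth keeping explicit, as you do, is that exactness of the final step requires knowing $\alpha$ exactly.
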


\noindent
We distinguish the two regimes of $k\geq n/2$ and $k<n/2$.

\subsection{Tight bound if \texorpdfstring{$k\geq n/2$}{k>=n/2}}


\begin{theorem}[Upper bound for small $m$]\label{thm:uppersamplerefsmallm}
Let $S\subseteq [n]$ be a set of size $k\geq n/2$ and let $m=n-k$. We can identify $S$ with probability $1$ using $O\left(\sqrt{km}\right)$ uses of $R_S=2\ketbra{S}{S}-\Id$.
\end{theorem}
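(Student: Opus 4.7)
The plan is to iteratively pull out the missing elements one at a time via exact amplitude amplification, producing the set $\overline S = [n]\setminus S$, and then to output its complement. Let $T$ denote the set of missing elements found so far (initially $T=\emptyset$), and let $U_T := [n]\setminus T$. Since we will maintain the invariant $T\subseteq \overline S$, the state $\ket{S}$ lives in the subspace spanned by $\{\ket{i} : i\in U_T\}$, and within that subspace the uniform superposition $\ket{U_T}$ decomposes as
\[
\ket{U_T} = \sqrt{\tfrac{k}{n-|T|}}\,\ket{S} + \sqrt{\tfrac{m-|T|}{n-|T|}}\,\ket{\overline S\setminus T}.
\]
I would first check that the reflection $R_S$ preserves the $(n-|T|)$-dimensional subspace spanned by $U_T$: for any $i\in U_T$ and any $e\in T\subseteq \overline S$, the matrix element $\bra{e}R_S\ket{i} = 2\braket{e|S}\braket{S|i}-\braket{e|i}$ vanishes because $\braket{e|S}=0$ and $e\neq i$. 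So $R_S$ restricted to this subspace is precisely reflection through $\ket{S}$ there.

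Next, in the 2D subspace spanned by $\ket{S}$ and $\ket{\overline S\setminus T}$, I would run exact amplitude amplification (the theorem cited just before this statement) with the reflections $R_S$ and $R_{U_T} := 2\ketbra{U_T}{U_T}-\Id_{U_T}$. The latter is gate-efficient since it is reflection through a state we can explicitly prepare. The overlap is $\alpha = \sqrt{(m-|T|)/(n-|T|)}$, which is known exactly because we know $k$ and track $|T|$, so exact amplification applies and yields $\ket{\overline S\setminus T}$ using $O(1/\alpha) = O\!\bigl(\sqrt{(n-|T|)/(m-|T|)}\bigr)$ uses of $R_S$. Measuring in the computational basis returns a uniformly random element of $\overline S\setminus T$, which we add to $T$.

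Repeating this for $j=0,1,\dots,m-1$ and summing the reflections gives a total of
\[
\sum_{j=0}^{m-1} O\!\Bigl(\sqrt{\tfrac{n-j}{m-j}}\Bigr) \;\le\; O(\sqrt{n})\sum_{i=1}^{m}\tfrac{1}{\sqrt{i}} \;=\; O(\sqrt{nm}) \;=\; O(\sqrt{km}),
\]
where the final equality uses $k\geq n/2$, i.e.\ $n\leq 2k$. After $m$ rounds, $T=\overline S$ with certainty (each measurement deterministically returns an element of $\overline S\setminus T$), so we output $S=[n]\setminus T$ with probability $1$.

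\paragraph{Anticipated obstacle.} The only subtle point is the invariance of the subspace $U_T$ under $R_S$ together with the fact that, within that subspace, AA driven by $R_S$ and $R_{U_T}$ stays in the 2D span of $\ket{S}$ and $\ket{\overline S\setminus T}$ and therefore achieves the stated $O(1/\alpha)$ reflection count exactly. Everything else is a routine telescoping sum of the form $\sum 1/\sqrt{i}\leq 2\sqrt{m}$ and bookkeeping of $k$ and $|T|$ to guarantee that the overlaps are known precisely so that \emph{exact} (rather than bounded-error) amplification is applicable.
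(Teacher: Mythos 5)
Your proposal is correct and follows essentially the same route as the paper's proof: iteratively extract missing elements by exact amplitude amplification from the uniform state over the remaining universe toward the superposition of remaining missing elements (using that $R_S$ acts as the needed reflection, up to phase, in the relevant 2D subspace), then sum $\sum_j \sqrt{(n-j)/(m-j)} = O(\sqrt{nm}) = O(\sqrt{km})$. Your explicit check that $R_S$ preserves the span of $U_T$ is a nice touch that the paper states more informally, but the argument is the same.
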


\begin{proof}
Our algorithm sequentially finds all $m$ missing elements.
We would like to use amplitude amplification to prepare a copy of $\ket{\overline{S}}$, which is the uniform superposition over the $m$ missing elements.
Consider the uniform state over the $n$-element universe:
$$
\ket{[n]}=\sqrt{\frac{k}{n}}\ket{S} + \sqrt{\frac{m}{n}}\ket{\overline{S}}.
$$
This state is easy to prepare, and hence also easy to reflect through.
Note that in the 2-dimensional plane spanned by $\ket{S}$ and $\ket{\overline{S}}$, reflection through $\ket{\overline{S}}$ is the same as a reflection through $\ket{S}$ up to an irrelevant global phase.
The inner product between $\ket{[n]}$ and $\ket{\overline{S}}$ equals $\sqrt{m/n}$. Accordingly, using $O(\sqrt{n/m})$ rounds of exact amplitude amplification (which only rotates in the 2-dimensional space spanned by $\ket{S}$ and $\ket{\overline{S}}$; each round ``costs'' one application of $R_S$) we can turn $\ket{[n]}$ into $\ket{\overline{S}}$, up to a global phase.

Measuring $\ket{\overline{S}}$ gives us one of the missing elements, uniformly at random. Now we remove this element from the universe. Note that $\ket{S}$ does not change since we removed an element of the universe that was missing from $S$. We then repeat the above algorithm on a universe of size $n-1$ with $m-1$ missing elements in order to find another missing element at the cost of $O(\sqrt{(n-1)/(m-1)})$ rounds of amplitude amplification, and so on.
This finds all missing elements (and hence $S$) with probability~1, using
$$
\sum_{i=0}^{m-1} O\left(\sqrt{\frac{n-i}{m-i}}\right)=O(\sqrt{n})\sum_{j=1}^{m} \frac{1}{\sqrt{j}}=O(\sqrt{nm})=O(\sqrt{km})
$$
applications of $R_S$, where we used $k\geq n/2$. Note that in this regime we do not need any copies of~$\ket{S}$, just reflections $R_S$.
\end{proof}

\begin{theorem}[Lower bound for small $m$]\label{thm:lowersamplerefsmallm}
Let $S\subseteq [n]$ be a set of size $k<n$ and let $m=n-k$. Any quantum algorithm that identifies $S$ with high probability using a total of $T$ copies of $\ket{S}$ and uses of~$R_S$, must satisfy $T=\Omega\left(\sqrt{km}\right)$. The lower bound holds even if we allow $T$ copies of $\ket{S}$, uses of $R_S$, and membership queries to $S$.
\end{theorem}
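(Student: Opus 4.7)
The plan is to reduce to the membership-query-only model and then invoke a standard lower bound for quantum search. The key observation is that each of the allowed non-query operations---a copy of $\ket{S}$ or an application of $R_S$---can be simulated by $O(\sqrt{n/k})$ queries to the standard phase oracle $O_S = \Id - 2\sum_{i\in S}\ketbra{i}{i}$. To prepare $\ket{S}$, one starts from the easy-to-prepare uniform state $\ket{[n]}$, which has overlap $\sqrt{k/n}$ with $\ket{S}$, and applies exact amplitude amplification: $O_S$ is the reflection through $\mathrm{span}\{\ket{i}:i\in S\}$, which on the two-dimensional invariant plane spanned by $\ket{S}$ and $\ket{\overline{S}}$ coincides (up to a global phase) with the reflection $R_S$ required for Grover iteration. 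Since $k$ is known the overlap is known exactly, so the amplification can be made error-free and produces $\ket{S}$ (up to a global phase) in $O(\sqrt{n/k})$ queries to $O_S$. Letting $U$ be this preparation unitary, the reflection $R_S = U(2\ketbra{0}{0}-\Id)U^\dagger$ uses the same number of queries. Consequently, any algorithm that uses a total of $T$ operations of the three allowed types---copies, reflections, and membership queries---can be simulated in the $O_S$-only model using $O(T\sqrt{n/k})$ queries.

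Next, I would invoke the standard lower bound that identifying a $k$-subset $S\subseteq[n]$ from membership queries alone requires $\Omega(\sqrt{n\min(k,m)})$ quantum queries. This is equivalent to the ``find all $r$ marked elements'' problem in $n$ items with $r=\min(k,m)$: one looks for all of $S$ when $k\le m$, or all of $\overline{S}$ when $k\ge m$. The $\Omega(\sqrt{nr})$ bound for finding all $r$ marked items is a classical result in quantum query complexity, derivable from the polynomial method or from the general adversary bound, so I would cite it as a black box rather than re-derive it.

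Combining the two steps, $T\cdot O(\sqrt{n/k}) = \Omega(\sqrt{n\min(k,m)})$, which rearranges to $T = \Omega(\sqrt{k\min(k,m)})$. In the regime $k\ge n/2$ of interest in this subsection we have $\min(k,m)=m$, so this specialises to $T=\Omega(\sqrt{km})$, as required. (For $k<n/2$ the same chain of inequalities gives only $\Omega(k)$, matching the tight bound proved in the next subsection rather than the $\sqrt{km}$ of the theorem; the statement is to be interpreted as tight in the $k\ge n/2$ regime.)

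The main technical point is to make the simulation step sharp enough for the reduction to be truly rigorous. Because $k$ is known exactly, exact amplitude amplification prepares $\ket{S}$ error-free up to a global phase, so no error accumulates across the $T$ simulated operations and the simulated algorithm in the $O_S$-only model behaves identically to the original. Everything else is an application of the cited search lower bound; the subtlety is entirely in checking that the Grover gadget built from $O_S$ really can replace each call to $\ket{S}$ and each $R_S$ without changing the success probability.
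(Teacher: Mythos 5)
Your proposal is correct and follows essentially the same route as the paper: both arguments reduce to the membership-query model by observing that, via exact amplitude amplification from $\ket{[n]}$ (whose overlap with $\ket{S}$ is the exactly known quantity $\sqrt{k/n}$), each preparation of $\ket{S}$ and each reflection $R_S$ costs $O(\sqrt{n/k})=O(1)$ phase queries in the regime $k\geq n/2$, and then both invoke the known $\Omega(\sqrt{nm})$ query lower bound for exactly identifying a weight-$k$ string (the paper cites \cite[Theorem~4.10]{bbcmw:polynomialsj}). Your closing remark that the $\sqrt{km}$ bound is only meaningful for $k\geq n/2$ is consistent with the paper, which places this theorem in that regime and handles $k<n/2$ separately.
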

\begin{proof}
We prove a matching lower bound in a stronger model, namely in a model where we can make queries to the $n$-bit characteristic vector~$x$ for~$S$. That is, we now assume we have a unitary~$U_S$ that maps
$$
U_S:\ket{i,b}\mapsto\ket{i,b\oplus x_i}\mbox{~~~for all }i\in[n],b\in\01, 
$$
where $x_i=1$ iff $i\in S$.

We first argue that this is indeed a stronger model, by showing how we can unitarily prepare a copy of $\ket{S}$ using $O(1)$ applications of $U_S$.
Note that $\inpc{[n]}{S}=\sqrt{k/n}\geq 1/\sqrt{2}$ under the current assumption that $k\geq n/2$. Also note that, in the 2-dimensional space spanned by $\ket{S}$ and $\ket{\overline{S}}$, a reflection through $\ket{S}$ corresponds to a ``phase query'' to~$x$, which can be implemented by one query to $U_S$ (setting the target qubit to $(\ket{0}-\ket{1})/\sqrt{2}$). Hence using $O(1)$ rounds of exact amplitude amplification suffices to prepare a copy of $\ket{S}$ starting from the state $\ket{[n]}$, which is easy to prepare and reflect through. Thus we can implement the state-preparation map $G_S\colon\ket{0}\mapsto \ket{S}$ using $O(1)$ applications of $U_S$. Note that one application of $G_S^{-1}$, followed by a reflection through $\ket{0}$ and an application of $G_S$, implements a reflection through $\ket{S}$. Thus preparing a copy of $\ket{S}$ and reflecting through $\ket{S}$ each ``cost'' only $O(1)$ queries to~$x$ (i.e., applications of $U_S$).

Accordingly, an algorithm that learns $S$ using at most $T$ copies of $\ket{S}$ and at most $T$ applications of $R_S$ implies a quantum algorithm that can learn an $n$-bit string $x$ of weight $k\geq n/2$ using $O(T)$ queries to~$x$.
But it is known that this requires $\Omega(\sqrt{nm})=\Omega(\sqrt{km})$ queries to~$x$, even when allowing bounded error probability. This follows, for instance, from~\cite[Theorem~4.10]{bbcmw:polynomialsj}. Hence we obtain the same lower bound on the number of copies of $\ket{S}$ plus the number of reflections through~$\ket{S}$.
\end{proof}

\subsection{Tight bound if \texorpdfstring{$k<n/2$}{k<n/2}}


\begin{theorem}[Upper bound for small $k$]\label{thm:uppersamplerefsmallk}
Let $S\subseteq [n]$ be a set of size $k<n$. We can identify $S$ with probability $1$ using $O(k)$ copies of $\ket{S}$ and uses of $R_S=2\ketbra{S}{S}-\Id$.
\end{theorem}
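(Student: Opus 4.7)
The plan is to find the $k$ elements of $S$ one at a time, using exact amplitude amplification to prepare, at each stage, a uniform superposition over the elements of $S$ that have not yet been identified. So suppose that at some point in the algorithm we have learned a subset $T\subseteq S$ of size $i<k$. In the two-dimensional subspace spanned by $\ket{T}=\frac{1}{\sqrt{i}}\sum_{j\in T}\ket{j}$ and $\ket{S\setminus T}=\frac{1}{\sqrt{k-i}}\sum_{j\in S\setminus T}\ket{j}$, a copy of $\ket{S}$ decomposes as
$$
\ket{S}=\sqrt{\tfrac{i}{k}}\,\ket{T}+\sqrt{\tfrac{k-i}{k}}\,\ket{S\setminus T}.
$$
Since $T$ is known, I can implement, with no oracle access, the reflection $\Id-2\sum_{j\in T}\ketbra{j}{j}$ through the ``good'' subspace spanned by $\{\ket{j}:j\notin T\}$; and by assumption I have the reflection $R_S$ through $\ket{S}$. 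Because $k$ and $i$ are both known, the amplitude $\sqrt{(k-i)/k}$ of the good component is known exactly, so one round of exact amplitude amplification (using these two reflections) converts a single copy of $\ket{S}$ into the state $\ket{S\setminus T}$, exactly, at the cost of $O(\sqrt{k/(k-i)})$ applications of $R_S$.

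First I would measure this state in the computational basis, obtaining a uniformly random element of $S\setminus T$, which is necessarily a new element of $S$. I add it to $T$ and repeat. After $k$ iterations, every element of $S$ has been found with probability $1$. The total number of uses of $R_S$ is
$$
\sum_{i=0}^{k-1}O\!\left(\sqrt{\tfrac{k}{k-i}}\right)=O(\sqrt{k})\sum_{j=1}^{k}\tfrac{1}{\sqrt{j}}=O(k),
$$
and at most one fresh copy of $\ket{S}$ is consumed per iteration, giving $k$ copies in total, so the combined resource count is $O(k)$ as required.

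There is no serious obstacle here: the only delicate point is that the amplitude amplification must be exact, so that we have a deterministic (probability-$1$) transformation to $\ket{S\setminus T}$ rather than a nearby state. Since $k$ and $i$ are integers known to the algorithm, the overlap $\sqrt{(k-i)/k}$ is known exactly, and the exact amplitude amplification theorem applies verbatim. All remaining computations are elementary and the sum above is the standard $\sum 1/\sqrt{j}=O(\sqrt{k})$ bound.
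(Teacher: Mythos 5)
Your proposal is correct and follows essentially the same route as the paper: sequentially extract elements of $S$ by using exact amplitude amplification (with $R_S$ and the easily implemented reflection that negates the already-found elements) to convert a fresh copy of $\ket{S}$ into $\ket{S\setminus T}$, then measure, with the same $O(\sqrt{k})\sum_{j=1}^k 1/\sqrt{j}=O(k)$ cost accounting. The only cosmetic difference is that you phrase the second reflection as $\Id-2\sum_{j\in T}\ketbra{j}{j}$ while the paper describes it as putting a minus sign on the found elements; these coincide on the relevant two-dimensional subspace.
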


\begin{proof}
Our algorithm sequentially finds all elements of $S$.
We start with a copy of $\ket{S}$ and measure to find one $i_1\in S$. Then we use exact amplification to convert a fresh copy of $\ket{S}$ into $\ket{S\setminus\{i_1\}}$. This requires being able to reflect through $\ket{S}$ (i.e., apply $R_S$), and reflect through $\ket{S\setminus\{i_1\}}$. In the 2-dimensional plane spanned by $\ket{S}$ and $\ket{S\setminus\{i_1\}}$, the latter reflection is equivalent to putting a minus in front of $\ket{i_1}$, which is easy to do. 
We measure $\ket{S\setminus\{i_1\}}$ and learn (with probability~1) another element $i_2\in S\setminus\{i_1\}$. Then we change a fresh copy of $\ket{S}$ into $\ket{S\setminus\{i_1,i_2\}}$, measure, and learn some $i_3\in S\setminus\{i_1,i_2\}$.
We repeat this until we have seen all $k$ elements.

The amplitude amplifications get more costly as we find more elements of $S$: If we have already found a set $I\subseteq S$, then changing a fresh copy of $\ket{S}$ to $\ket{S\setminus I}$ uses $O(\frac{1}{\inpc{S}{S\setminus I}})=O(\sqrt{k/(k-|I|)})$ reflections, and hence $O(\sqrt{k/(k-|I|)})$ applications of~$R_S$. Overall, this procedure finds $S$ using $k=|S|$ copies of $\ket{S}$, and
$$
\sum_{i=0}^{k-1} O\left(\sqrt{\frac{k}{k-i}}\right)=O(\sqrt{k})\sum_{j=1}^k \frac{1}{\sqrt{j}}=O(k)
$$
applications of $R_S$.
\end{proof}

\begin{theorem}[Lower bound for small $k$]\label{thm:lowersamplerefsmallk}
Let $S\subseteq [n]$ be a set of size $k<n$. Any quantum algorithm that identifies $S$ with high probability using a total of $T$ copies of $\ket{S}$, and uses of $R_S$ must satisfy $T=\Omega(k)$. The lower bound holds even if we allow $T$ copies of $\ket{S}$, uses of $R_S$, and membership queries to $S$.
\end{theorem}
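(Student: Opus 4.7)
The plan is to mimic the reduction used for Theorem~\ref{thm:lowersamplerefsmallm}: compile everything down to membership queries $U_S\colon\ket{i,b}\mapsto\ket{i,b\oplus x_i}$ on the characteristic string $x$ of $S$, and then invoke a known quantum query lower bound for learning a weight-$k$ string. The twist compared with the $k\ge n/2$ case is that $\braket{[n]\,|\,S}=\sqrt{k/n}\le 1/\sqrt{2}$, so simulating a single copy of $\ket{S}$ or a single reflection $R_S$ now costs more queries than before; on the other hand, the corresponding query lower bound will be proportionally larger, and the two effects should cancel to give the tight $\Omega(k)$ bound.

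First I would show that each copy of $\ket{S}$ and each use of $R_S$ can be implemented with $O(\sqrt{n/k})$ queries to $U_S$. The state $\ket{[n]}$ is free to prepare and reflect through, and inside the two-dimensional plane $\spann\{\ket{S},\ket{\overline{S}}\}$ a phase query equals (up to a global sign) the reflection through $\ket{S}$. Since the overlap $\sqrt{k/n}$ is known exactly, exact amplitude amplification yields a preparation unitary $G_S\colon\ket{0}\mapsto\ket{S}$ using $O(\sqrt{n/k})$ phase queries, and then $R_S = G_S(2\ketbra{0}{0}-\Id)G_S^{-1}$ has the same query cost. An algorithm using a total of $T$ copies of $\ket{S}$, reflections $R_S$, and queries to $U_S$ can therefore be transformed into an algorithm solving the same learning task using $O(T\sqrt{n/k})$ queries to $U_S$ alone.

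Next I would apply the symmetric version of the lower bound cited in the proof of Theorem~\ref{thm:lowersamplerefsmallm}: a query to $x$ is equivalent (up to a trivial XOR on the answer register) to a query to the complement $\overline{x}$, which has Hamming weight $n-k\ge n/2$, so \cite[Theorem~4.10]{bbcmw:polynomialsj} applied to $\overline{x}$ (with the roles of $k$ and $m=n-k$ swapped) shows that any bounded-error algorithm identifying $x$ requires $\Omega(\sqrt{nk})$ queries. Combining this with the simulation above gives $T\sqrt{n/k}=\Omega(\sqrt{nk})$, i.e.\ $T=\Omega(k)$. The one point where I would want to be particularly careful is verifying that the amplitude-amplification subroutine really can be spliced in coherently in place of each of the $T$ resources so that the resulting object is a legitimate query algorithm accessing $U_S$ alone — but this is exactly the same substitution used in the proof of Theorem~\ref{thm:lowersamplerefsmallm}, merely with a non-constant (but known) number of queries per resource.
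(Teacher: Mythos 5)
Your proof is correct, but it takes a genuinely different route from the paper's. The paper's argument is a two-line reduction to Theorem~\ref{thm:lowersamplerefsmallm}: give the learner, for free, $n-2k$ of the elements outside $S$. This shrinks the universe to size $n'=2k$ with $m'=k$ missing elements while leaving $\ket{S}$, $R_S$, and the membership oracle completely unchanged, so the already-proven bound $\Omega(\sqrt{n'm'})=\Omega(k)$ from the $k\ge n'/2$ regime applies verbatim, and since extra advice cannot make the problem harder the same bound holds for the original instance. You instead redo the compilation into membership queries from scratch, now paying $O(\sqrt{n/k})$ phase queries per copy of $\ket{S}$ or per use of $R_S$ (via exact amplitude amplification from $\ket{[n]}$, whose overlap $\sqrt{k/n}$ with $\ket{S}$ is known exactly), and you invoke the complementary form of the query lower bound, $\Omega(\sqrt{nk})$ for identifying a weight-$k$ string with $k\le n/2$; the two $\sqrt{n/k}$ factors cancel to give $T=\Omega(k)$. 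Both arguments are sound and ultimately rest on the same underlying query lower bound; the paper's restriction trick is shorter and avoids re-verifying that the costlier amplitude-amplification subroutines splice in coherently, while yours is more self-contained and makes the dependence on $n$ explicit. One small point common to both: the claimed $\Omega(k)$ bound is only meaningful for $k\le n/2$ (as in Table~\ref{tab:mainresults}), since for $k$ close to $n$ it would contradict the $O(\sqrt{km})$ upper bound --- and your complementation step indeed implicitly uses $n-k\ge n/2$.
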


\begin{proof}
To prove a matching lower bound, suppose our algorithm receives advice in the form of $n-2k$ of the missing elements. This advice reduces the problem to one with universe size $n'=n-(n-2k)=2k$ and $m'=m-(n-2k)=k$ missing elements. Importantly, note that $\ket{S}$, and hence $R_S$, do not change after learning these missing elements. But in Theorem \ref{thm:lowersamplerefsmallm} we already proved an $\Omega(\sqrt{n'm'})=\Omega(k)$ lower bound on the number of copies of $\ket{S}$, reflections, and queries to $S$ needed to solve this special case.  Since the extra advice cannot have made the original problem harder, the same lower bound applies to our original problem.
\end{proof}

\section{Proper PAC learning}\label{sec:properPAC}

As mentioned briefly in the introduction, one of the motivations for this research is the question whether the sample complexity of \emph{proper} quantum PAC learning is higher than that of improper PAC learning.
Let us precisely define Valiant's PAC model~\cite{valiant:paclearning}. We are trying to learn an unknown element $f$ from a \emph{concept class} $\Cc$. For simplicity we only consider $f$s that are Boolean-valued functions on~$[n]$. Our access to~$f$ is through random examples, which are pairs of the form $(x,f(x))$, where $x$ is distributed according to a distribution $D\colon[n]\rightarrow [0,1]$ that is unknown to the learner. A learning algorithm takes a number $T$ of such i.i.d.\ examples as input, and produces a hypothesis~$h\colon[n]\rightarrow \01$ that is supposed to be close to the target function~$f$. The error of the hypothesis $h$ (with respect to the target~$f$, under distribution~$D$) is defined as
$$
\err_D(f,h):=\Pr_{x\sim D}[f(x)\neq h(x)].
$$
We say that a learning algorithm is an \emph{$(\eps,\delta)$-PAC learner for $\Cc$}, if it probably (i.e., with probability at least $1-\delta$) outputs an approximately correct (i.e., with error at most $\eps$) hypothesis $h$:
$$
\forall f\in \Cc,\forall D: \Pr[\err_D(f,h)>\eps]\leq\delta,
$$
where the probability is taken over the sequence of $T$ $D$-distributed examples that the learner receives, as well as over its internal randomness.
The \emph{$(\eps,\delta)$-PAC sample complexity} of~$\Cc$ is the minimal $T$ for which such a learning algorithm exists.\footnote{This definition uses the information-theoretic notion of \emph{sample} complexity. We do not consider the \emph{time} complexity of learning here. For more on sample and time complexity of quantum learning, we refer the reader to~\cite{arunachalam:quantumlearningsurvey}.}

The PAC sample complexity of $\Cc$ is essentially determined by its VC-dimension~$d$ as\footnote{The VC-dimension of $\Cc$ is the maximum size among all sets~$T\subseteq[n]$ that are ``shattered'' by $\Cc$. A set $T$ is shattered by $\Cc$ if for all $2^{|T|}$ labelings $\ell\colon T\to\01$ of the elements of~$T$, there is an $f\in\Cc$ that has that labeling (i.e., where $f_{|T}=\ell$).} 
\begin{equation}\label{eq:VCsample}
\Theta\left(\frac{d}{\eps} + \frac{\log(1/\delta)}{\eps}\right).
\end{equation}
See Blumer et al.~\cite{blumer:optimalpacupper} for the lower bound and Hanneke~\cite{hanneke:optimalpac} for the upper bound. 

The above upper bound on sample complexity allows the learner to be improper, i.e., to sometimes output hypotheses $h\not\in\Cc$.
The following folklore example, which we learned from Steve Hanneke \cite{hanneke:comm}, shows that the sample complexity of \emph{proper} learning can be asymptotically larger.\footnote{In a recent result, Montasser et al.~\cite{mhs:robust} proved another separation between proper and improper learning.} 
Consider the concept class $\Cc=\{f\colon[n]\to\01 \mid \exists!~i\text{ s.t.\ }f(i)=0\}$ of functions that are all-1 except on one ``missing element''~$i$.
The VC-dimension of this class is~1, hence $\Theta\bigl(\frac{\log(1/\delta)}{\eps}\bigr)$ classical examples are necessary and sufficient for PAC learning~$\Cc$ by \eqref{eq:VCsample}.
With $\eps=1/n$ and $\delta=1/3$, this bound becomes $\Theta(n)$.
Now fix an $(\eps,\delta)$-PAC \emph{proper} learner for this class that uses some $T$ examples; 
we will show that $T=\Omega(n\log n)$, exhibiting an asymptotic separation between the sample complexities of proper and improper PAC learning.

For every $i\in [n]$, consider a distribution $D_i$ that is uniform over $[n]\setminus\{i\}$. If the target concept~$f$ has $i$ as its missing element then the learner has to output that $f$, since any other $g\in\Cc$ will make an error on its own missing element and hence would have error at least $1/(n-1)>\eps$ under~$D_i$. 
In other words, when sampling from $D_i$ the learner has to identify the one missing element~$i$ with success probability $\geq 2/3$. 
But we know from the coupon collector argument that this requires $\Omega(n\log n)$ samples. Note that a $D_i$-distributed $(x,f(x))$ is equivalent to sampling uniformly from $[n]\setminus\{i\}$, since the label~$f(x)$ is always~1 under~$D_i$.

What about \emph{quantum} PAC learning? Bshouty and Jackson~\cite{bshouty:quantumpac} generalized the PAC model by considering superposition states
$$
\ket{\psi_{D,f}}=\sum_x\sqrt{D(x)}\ket{x,f(x)}
$$
instead of random samples. The learner now receives $T$ copies of this ``quantum example'' state, and has to output a probably approximately correct hypothesis.
Measuring a quantum example gives a classical example, so quantum examples are at least as useful as classical examples, but one of the questions in quantum learning theory is in what situations they are significantly more useful.
Two of us~\cite{arunachalam:optimalpaclearning} have shown that the bound of \eqref{eq:VCsample} also applies to learning from quantum examples, so for improper learning the quantum and classical sample complexities are equal up to constant factors. However, quantum examples \emph{are} beneficial for learning $\Cc$ under the $D_i$ distributions. Note that $\ket{\psi_{D_i,f_i}}$ is just the uniform superposition over the set $S=[n]\setminus\{i\}$, tensored with an irrelevant extra $\ket{1}$. As we showed in Section~\ref{sec:upperbound}, given $O(n)$ copies of $\ket{\psi_{D_i,f_i}}$ we can identify the one missing element~$i$ with probability $\geq 2/3$. So the example that separates the sample complexities of \emph{classical} proper and improper learning, does not separate \emph{quantum} proper and improper learning. This naturally raises the question of whether the quantum sample complexities of proper and improper PAC learning are asymptotically equal (which, as mentioned, they provably are not in the classical case).


\paragraph{Acknowledgments.}
We thank Steve Hanneke for sharing with us the folklore example separating classical proper and improper PAC learning~\cite{hanneke:comm}.
AMC, RK, and RdW did part of this work during the ``Challenges in Quantum Computation'' program at the Simons Institute for the Theory of Computing in Berkeley in Summer 2018 and gratefully acknowledge its hospitality.

\bibliographystyle{alpha}
\bibliography{qcs}

\end{document}